\def\singlespace{\def\baselinestretch{1}\@normalsize}
\newtheorem{theorem}{Theorem}
\newtheorem{lemma}{Lemma}
\newtheorem{proposition}{Proposition}
\newtheorem{corollary}{Corollary} 
\theoremstyle{definition}
\def\beq{\begin{equation}}
\def\eeq{\end{equation}}
\def\beqr{\begin{eqnarray}}
\def\eeqr{\end{eqnarray}}
\def\beqrs{\begin{eqnarray*}}
\def\eeqrs{\end{eqnarray*}}
\def\bet{\begin{theorem}}
\def\eet{\end{theorem}}
\def\bel{\begin{lemma}}
\def\eel{\end{lemma}}
\def\bep{\begin{proposition}}
\def\eep{\end{proposition}}
\def\bg{\begin{figure}[tbph]\begin{center}}
\def\eg{\end{center}\end{figure}}
\def\bc{\begin{center}}
\def\ec{\end{center}}
\renewcommand{\baselinestretch}{1.4}
\numberwithin{equation}{section}
\begin{document}

\title{Corrected Bayesian information criterion for stochastic block models
}
\author{{\sc  Jianwei Hu$^a$, Hong Qin$^{a,b}$, Ting Yan$^a$ and Yunpeng Zhao$^c$}\\
$^a$Central China Normal University, $^b$Zhongnan University of Economics and Law\\
 \& $^c$Arizona State University}
\date{\today}
\maketitle{}

\begin{singlespace}
\begin{footnotetext}[1]
{Jianwei Hu, Department of Statistics, Central China Normal University, Wuhan, P.R. China, 430079 (Email: jwhu@mail.ccnu.edu.cn).
Hong Qin, Department of Statistics, Central China Normal University, Wuhan, P.R. China, 430079 (Email: qinhong@mail.ccnu.edu.cn) and Department of Statistics, Zhongnan University of Economics and Law, Wuhan, P.R. China, 430073 (Email: z0000020@zuel.edu.cn).
Ting Yan, Department of Statistics, Central China Normal University, Wuhan, P.R. China, 430079 (Email: tingyanty@mail.ccnu.edu.cn).
Yunpeng Zhao, School of Mathematical and Natural Sciences, Arizona State University, Tempe AZ, USA, 85281 (Email: Yunpeng.Zhao@asu.edu). All four authors contributed equally to this article.
}
\end{footnotetext}
\end{singlespace}

\begin{abstract}
Estimating the number of communities is one of the fundamental problems in community detection.
We re-examine the Bayesian paradigm for stochastic block models and propose a ``corrected Bayesian information criterion" (CBIC),
to determine the number of communities
and show that the proposed criterion is consistent under mild conditions as the size of the network and the number of communities go to infinity.
The CBIC outperforms those used in \cite{Wang:Bickel:2017} and \cite{Saldana:Yu:Feng:2017} which tend
to underestimate and overestimate the number of communities, respectively.
The results are further extended to degree corrected stochastic block models. Numerical studies demonstrate our theoretical results.\\

\noindent{\bf KEY WORDS:} Consistency; degree corrected stochastic block model; network data; stochastic block model.

\end{abstract}
\renewcommand{\baselinestretch}{1.75}
\baselineskip=24pt

\section{Introduction}

Community structure is one of the most widely used structures for network data.
For instance, peoples form groups in social networks based on common locations, interests, occupations, etc.;
proteins form communities based on functions in metabolic networks; publications can be grouped to communities by research topics in
citation networks.
The links (or edges) between nodes are generally dense within communities and relatively sparse between communities.
Identifying such sub-groups provides important insights into network formation mechanism and
how network topology affects each other.

The stochastic block model (SBM) proposed by \cite{Holland:Laskey:Leinhardt:1983} is one of the best studied network models for community
structures. See \cite{Snijders:Nowicki:1997} and \cite{Nowicki:Snijders:2001} for a first application of the SBM in community detection. We briefly describe the model.
Let $A\in \{0,1\}^{n\times n}$ be the symmetric adjacency matrix of an undirected graph with $n$ nodes.
In the SBM with $k$ communities, each node is associated with a community, labeled by $z_k(i)$,
where $z_k(i)\in [k]$. Here $[m]=\{1,\ldots,m\}$ for any positive integer $m$. In other words, the nodes are given a
community assignment $z_k: [n]\rightarrow[k]^n$.
The diagonal entries of  $A$ are zeros (no self-loop) and the entries of the upper triangle matrix $A$ are independent Bernoulli random variables
with success probabilities that only depend on the community labels of nodes $i$ and $j$.
That is, all edges are independently generated given the node communities, and for
a certain probability matrix $\mathbf{\theta}_k=\{\theta_{kab}\}_{1\leq a,b\leq k}$,
\[
P(A_{ij}=1\mid z_k(i),z_k(j))=\theta_{kz_k(i)z_k(j)}.
\]
For simplicity, $\theta_k$ and $z_k$ are abbreviated to $\theta$ and $z$, respectively.

A wide variety of methods have been proposed to estimate the latent community membership of nodes in an SBM,
including modularity \citep{Newman:2006a},
profile-likelihood \citep{Bickel:Chen:2009}, pseudo-likelihood \citep{Amini:etal:2013},
variational methods \citep{Daudin:Picard:Robin:2008, Latouche:Birmele:Amroise:2012},
spectral clustering \citep{Rohe:Chatterjee:Yu:2011, Fishkind:etal:2013, Jin:2015}, belief propagation \citep{Decelle:etal:2011}, etc.
The asymptotic properties of these methods have also been established under different settings
\citep{Bickel:Chen:2009, Rohe:Chatterjee:Yu:2011, Celisse:Daudin:Pierre:2012, Bickel:etal:2013, Gao:Lu:Zhou:2015, Zhang:Zhou:2016}.

However, most of these works assume that the number of communities $k$ is known a priori.
In a real-world network, $k$ is usually unknown and needs to be estimated.
Therefore, it is of importance to investigate how to choose $k$ (called model selection in this article). Some methods have been proposed in recent years,
including a recursive approach \citep{Zhao:Levina:Zhu:2011}, spectral methods \citep{Le:Levina:2015}, sequential tests \citep{Bickel:Sarkar:2015, Lei:2016},
and network cross-validation \citep{Chen:Lei:2018}.
The likelihood-based methods for model selection have also been proposed
\citep{Daudin:Picard:Robin:2008, Latouche:Birmele:Amroise:2012, Saldana:Yu:Feng:2017}.

\cite{Wang:Bickel:2017} proposed a penalized likelihood method with the penalty  function
\begin{equation}\label{eq:penalized:wb}
\lambda \frac{k(k+1)}{2} n\log n,
\end{equation}
where $\lambda$ is a tuning parameter.
An alternative penalty function $\frac{k(k+1)}{2}\log n$ (called the ``BIC")
was used to select the number of communities in \cite{Saldana:Yu:Feng:2017}.
As will be shown later in the paper, using the penalty function \eqref{eq:penalized:wb} and the BIC to estimate $k$ tends to underestimate and overestimate the number of communities respectively. We therefore propose a ``corrected Bayesian information criterion'' (CBIC) that is in the midway of those two criteria.
Specifically, we propose the following penalty function
\begin{equation}\label{eq:lambda}
\lambda n\log k+\frac{k(k+1)}{2}\log n,
\end{equation}
which is lighter than \eqref{eq:penalized:wb} used by \cite{Wang:Bickel:2017} and is heavier than  the BIC penalty used by \cite{Saldana:Yu:Feng:2017}. Rigorously speaking, \cite{Wang:Bickel:2017} dealt with the marginal log-likelihood where $z$ as latent variables are integrated out, while we plug  a single estimated community assignment into the log-likelihood. 

For fixed $k$, \cite{Wang:Bickel:2017} established the limiting distribution of the log-likelihood ratio
under model misspecification -- both underfitting and overfitting, and thereby determined an upper bound $o(n^2)$ and a lower bound $n$ of the order of the penalty term for a consistent model selection. Based on the work of \cite{Wang:Bickel:2017}, we derived new upper and lower bounds for increasing $k$. According to our theory (see the proof of Theorem 4 for details), the main orders of both the upper and lower bounds are $n\log k$. In this sense, the bounds we obtained are sharp. Based on these results, we establish the consistency of the CBIC in determining the number of communities. The results are further extended to degree corrected block models. Along the way of proving consistency, we also obtain the Wilks theorem for stochastic block models.

The remainder of the paper is organized as follows.
In Section \ref{section:corrected BIC}, we re-examine the Bayesian paradigm for stochastic block models and propose the CBIC to determine the number of communities. In Section \ref{section:2}, we analyze the asymptotic behavior of the log-likelihood ratio and establish
their asymptotic distributions.
In Section \ref{section:modelselection},
we establish the consistency of the estimator for the number of communities. We extend our results to degree corrected stochastic block models in Section \ref{section:extension}.
The numerical studies are given in Section \ref{section:experiments}.
Some further discussions are made in Section \ref{section:discussion}.
All proofs are given in the Appendix.

\section{Corrected BIC}
\label{section:corrected BIC} In this section, we re-examine the Bayesian paradigm for the SBM and propose a corrected family of Bayesian information criteria.

For any fixed $(\theta,z)$, the log-likelihood
of the adjacency matrix $A$ under the stochastic block model is
\[
\log f(A|\theta,z)=\sum_{1\leq a\leq b\leq k}(m_{ab}\log\theta_{ab}+(n_{ab}-m_{ab})\log(1-\theta_{ab})),
\]
where $n_a=\sum_{i=1}^n\mathbf{1}\{z(i)=a\}$, for $a\neq b$,
$$n_{ab}=n_an_b,\,\,m_{ab}=\sum_{i=1}^n\sum_{j\neq i}A_{ij}\mathbf{1}\{z(i)=a,z(j)=b\},$$
$$n_{aa}=n_a(n_a-1)/2,\,\,m_{aa}=\sum_{i<j}A_{ij}\mathbf{1}\{z(i)=a,z(j)=a\}.$$

\cite{Saldana:Yu:Feng:2017} used the following penalized likelihood function to select the optimal number of communities:
\begin{equation}
\label{eq:bic}
\check{\ell}(k)=\sup_{\theta\in\Theta_{k}}\log f(A|\theta,z)-\frac{k(k+1)}{2}\log n,
\end{equation}
where $\Theta_{k}=[0,1]^{\frac{k(k+1)}{2}}$. Note that \eqref{eq:bic} is not a standard BIC criterion but a BIC approximation of the log-likelihood for given $z$ (see \eqref{eq:bicapprox}). \cite{Saldana:Yu:Feng:2017} essentially estimates the number of communities $k$ using the following criterion, which we refer to as the BIC hereafter:
\[
\bar{\ell}(k)=\max_{z\in [k]^n}\sup_{\theta\in\Theta_{k}}\log f(A|\theta,z)-\frac{k(k+1)}{2}\log n.
\] According to our simulation studies,
this BIC tends to overestimate the number on communities (see Section \ref{section:experiments}). We now provide some insight of why this phenomenon occurs.

Let $Z$ be the set of all possible community assignments under consideration and let  $\xi(z)$ be a prior probability of community assignment $z$. Assume that the prior density of $\theta$ is given by $\pi(\theta)$. Then the posterior probability of $z$ is
\[
P(z|A)=\frac{g(A|z)\xi(z)}{\sum_{z\in Z}g(A|z)\xi(z)},
\]
where $g(A|z)$ is the likelihood of community assignment $z$, given by
\[
g(A|z)=\int f(A|\theta,z)\pi(\theta)d\theta.
\]

Under the Bayesian paradigm, a community assignment $\hat{z}$ that maximizes the posterior probability is selected. Since $\sum_{z\in Z}g(A|z)\xi(z)$ is a constant, \[\hat{z}=\max_{z\in [k]^n}g(A|z)\xi(z).\]
By using a BIC approximation\footnote{The BIC approximation is a general principle and is not to be confused with the BIC criterion used in \cite{Saldana:Yu:Feng:2017}.} \citep{Schwarz:1978,Saldana:Yu:Feng:2017}, we have
\begin{equation}\label{eq:bicapprox}
\log (\int f(A|\theta,z)\pi(\theta)d\theta)=\sup_{\theta\in\Theta_{k}}\log f(A|\theta,z)-\frac{1}{2}\frac{k(k+1)}{2}\log \frac{n(n-1)}{2}+O(1).
\end{equation}
Thus,
\begin{equation}\label{eq:cbic}
\log g(A|z)\xi(z)=\sup_{\theta\in\Theta_{k}}\log f(A|\theta,z)-\frac{k(k+1)}{2}\log n+O(1)+\log\xi(z).
\end{equation}

By comparing equations \eqref{eq:bic} and \eqref{eq:cbic},
we can see that the BIC essentially assumes that $\xi(z)$ is a constant for $z$ over $Z$, i.e., $\xi(z)=1/\tau(Z)$, where $\tau(Z)$ is the size of $Z$.
Suppose that the number of nodes in the network is $n=500$. The set of community assignments for $k=2$, $Z_2$, has size $2^{500}$, while the set of community assignments for $k=3$, $Z_3$, has size $3^{500}$. The constant prior in the BIC assigns probabilities to $Z_k$ proportional to their sizes. Thus the probability assigned to $Z_3$ is $1.5^{500}$ times that assigned to $Z_2$. Community assignments with a larger number of communities get much higher probabilities than community assignments with fewer communities. This provides an explanation for why
the BIC tends to overestimate the number of communities.

This re-examination of the BIC naturally motivates us to consider a new prior over $Z$. Assume that  $Z$ is partitioned into $\bigcup_{k=1}Z_k$. Let $\tau(Z_k)$ be the size of $Z_k$. We assign the prior distribution over $Z$ in the following manner.
We assign an equal probability to $z$ in the same $Z_k$, i.e., $P(z|Z_k)=1/\tau(Z_k)$ for any $z\in Z_k$.
This is due to that all the community assignments in $Z_k$ are equally plausible.
Next, instead of assigning probabilities $P(Z_k)$ proportional to $\tau(Z_k)$ , we assign $P(Z_k)$ proportional to $\tau^{-\delta}(Z_k)$ for some $\delta$. Here $\delta>0$ implies that a small number of communities are plausible while $\delta<0$ implies that a large number of communities are plausible. This results in the prior probability
\[
\xi(z)=P(z|Z_k)P(Z_k)\varpropto\tau^{-\lambda}(Z_k), ~~ z\in Z_k,
\]
where $\lambda=1+\delta$.
Thus,
\[
\log g(A|z)\xi(z)=\sup_{\theta\in\Theta_{k}}\log f(A|\theta,z)-\frac{k(k+1)}{2}\log n+O(1)-\lambda n\log k.
\]
This type of prior distribution on the community assignment suggests a corrected BIC criterion (CBIC) as follows:
\begin{equation}
\label{eq:penalized:cbic}
\ell(k)=\max_{z\in [k]^n}\sup_{\theta\in\Theta_{k}}\log f(A|\theta,z)-\left[ \lambda n\log k +\frac{k(k+1)}{2}\log n \right],
\end{equation}
where the second term is the penalty and $\lambda\geq0$ is a tuning parameter.
Then we estimate $k$ by maximizing the penalized likelihood function:
\[
\hat{k}=\arg\max_{k}\ell(k).
\]

We make some remarks on the choice of the tuning parameter.
If we have no prior information on the number of communities -- i.e.,
both small number of communities and large number of communities are equally plausible,
then $\lambda=1$ ($\delta=0$) is a good choice. It is similar to the case of variable selection in regression analysis, where the BIC is also tuning free.

The CBIC is also related to an integrated classification likelihood (ICL) method proposed by \cite{Daudin:Picard:Robin:2008}. The penalty function in ICL can be written as
\begin{equation}\label{eq:penalized:dpr}
\sum_{a=1}^kn_a\log(n/n_a)+\frac{k^2+2k}{2}\log n.
\end{equation}
The penalty term in the ICL criterion uses unknown quantities $n_a$ that need to be estimated, and is thus not a standard BIC-type criterion. With equal-sized estimated communities, this penalty is almost the same as the CBIC with $\lambda=1$ since $\sum_{a=1}^kn_a\log(n/n_a) = n \log k $. However, the CBIC has tuning parameter $\lambda$ that gives more flexibility. If we have prior information that large numbers of communities
 are plausible, $\lambda<1$ ($\delta<0$) is a good choice and the CBIC performs significantly better than the ICL in simulation studies (see Section \ref{section:experiments}).

 Moreover, theoretical properties of ICL have not been well-studied while the consistency of the CBIC will be established in this paper. In order to obtain the consistency of the CBIC, we analyze the asymptotic order of the log-likelihood ratio under model-misspecification in the next section.

\section{Asymptotics of the log-likelihood ratio}
\label{section:2}

In this section, we present the order of the log-likelihood ratio built on the work of \cite{Wang:Bickel:2017}.
The results here will be used for the proof of Theorem \ref{theorem:modelselection:a} in the next section.

We consider the following log-likelihood ratio
\[
L_{k,k'}=\max_{z\in [k']^n}\sup_{\theta\in\Theta_{k'}}\log f(A|\theta,z)-\log f(A|\theta^*,z^*),
\]
where $\theta^*$ and $z^*$ are the true parameters. Further, $k'$ is the number of communities under the alternative model and $k$ is the true
number of communities. Therefore, the comparison is made between the correct $k$-block model and a fitted $k'$-block model.

The asymptotic distributions of $L_{k,k'}$ for the cases $k'<k$ and $k'>k$ are given in this section.
For the case $k'=k$, we establish the Wilks theorem.

\subsection{$k'<k$}
We start with $k'=k-1$.
As discussed in \cite{Wang:Bickel:2017}, a $(k-1)$-block model can be obtained by merging blocks in a $k$-block model.
Specifically, given the true labels $z^*\in[k]^n$ and $p=(p_{ab})_{k\times k}$, where $p_{ab}=n_{ab}(z^*)/(\frac{n(n-1)}{2})$,
we define a merging operation $U_{a,b}(\theta^*,p)$ which combines blocks $a$ and $b$ in $\theta^*$
by taking weighted averages with proportions in $p$. For example, for $\theta'=U_{k-1,k}(\theta^*,p)$,
$$\theta'_{ab}=\theta_{ab}^*\,\,\, \mathrm{for}\,\,\, 1\leq a\leq b\leq k-2,$$
$$\theta'_{a(k-1)}=\frac{p_{a(k-1)}\theta_{a(k-1)}^*+p_{ak}\theta_{ak}^*}{p_{a(k-1)}+p_{ak}}\,\,\, \mathrm{for}\,\,\, 1\leq a\leq k-2,$$
$$\theta'_{(k-1)(k-1)}=\frac{p_{(k-1)(k-1)}\theta_{(k-1)(k-1)}^*+p_{(k-1)k}\theta_{(k-1)k}^*+p_{kk}\theta_{kk}^*}{p_{(k-1)(k-1)}+p_{(k-1)k}+p_{kk}}.$$
For consistency, when merging two blocks $(a,b)$ with $b>a$, the new merged block will be relabeled as $a$ and all the blocks $c$ with $c>b$
will be relabeled as $c-1$. Using this scheme, we also obtain the merged node labels $U_{a,b}(z^*)$. For $z'=U_{k-1,k}(z^*)$, define
$$m_{ab}(z')=m_{ab},\,\,\,n_{ab}(z')=n_{ab}\,\,\, \mathrm{for}\,\,\, 1\leq a\leq b\leq k-2,$$
$$m_{a(k-1)}(z')=m_{a(k-1)}+m_{ak},\,\,\,n_{a(k-1)}(z')=n_{a(k-1)}+n_{ak}\,\,\, \mathrm{for}\,\,\, 1\leq a\leq k-2,$$
$$m_{(k-1)(k-1)}(z')=m_{(k-1)(k-1)}+m_{(k-1)k}+m_{kk},\,\,\,n_{(k-1)(k-1)}(z')=n_{(k-1)(k-1)}+n_{(k-1)k}+n_{kk}.$$
To obtain the asymptotic distribution of $L_{k,k'}$, we
need the following conditions.\\
(A1) There exists $C_1>0$ such that $\min_{1\leq a\leq k}n_a\geq C_1n/k$ for all $n$.\\
(A2) Any two rows of $\theta^*$ should be distinct.\\
(A3) The entries of $\theta^*$ are uniformly bounded away from 0 and 1.

In Condition (A1), the lower bound on the smallest community
size requires that the size of each community is at least proportional to
$n/k$. This is a reasonable and mild condition; for example, it is satisfied almost surely if the membership vector
 is generated from a multinomial distribution with $n$ trials and probability
$\pi=(\pi_1, \ldots, \pi_{k})$ such that $\min_{1\le u \le k} \pi_u \ge C_1/k$.  This condition was also used in \cite{Lei:2016}. Condition (A2) requires that the merged model cannot be collapsed further to a smaller model.

Condition (A3) requires the overall density of the network to be a constant. To allow for a sparser network, we can further parametrize $\theta^*=\rho_n\tilde{\theta}^*$ where $\tilde{\theta}^*$ is a constant and $\rho_n\rightarrow 0$ at the rate $n \rho_n/\log n \rightarrow \infty$. (Using this parametrization $\rho_n \equiv 1$ indicates  a constant graph density.) Condition (A3) in this case becomes

(A$3'$) The entries of $\tilde{\theta}^*$ are uniformly bounded away from 0 and 1.

The asymptotic distribution of $L_{k,k-1}$ for a dense network is stated below, the proof of which is given in the Appendix. 

\begin{theorem}\label{theorem:underfit:a}
Suppose that $A\sim P_{\theta^*, z^*}$, conditions (A1)-(A3) hold, and $\rho_n \equiv 1$.  If $k=o((n/\log n)^{1/2})$, we have
\[
(n^{-1}L_{k,k-1}-n\mu)/\sigma(\theta^*)  \stackrel{d}{\rightarrow}  N(0,1),
\]
where
\[
\mu = \frac{1}{n^2}(\sum_{k-1\leq a\leq b\leq k}n_{ab}'(\theta_{ab}'\log\frac{\theta_{ab}'}{1-\theta_{ab}'}+\log(1-\theta_{ab}'))
-\sum_{k-1\leq a\leq b\leq k}n_{ab}(\theta_{ab}^*\log\frac{\theta_{ab}^*}{1-\theta_{ab}^*}+\log(1-\theta_{ab}^*))),
\]
\[
\sigma^2(\theta^*) = \frac{1}{n^2}(\sum_{k-1\leq a\leq b\leq k}n'_{ab}\theta_{ab}'(1-\theta_{ab}')
(\log\frac{\theta_{ab}'}{1-\theta_{ab}'})^2+\sum_{k-1\leq a\leq b\leq k}n_{ab}\theta_{ab}^*(1-\theta_{ab}^*)(\log\frac{\theta_{ab}^*}{1-\theta_{ab}^*})^2).
\]
\end{theorem}

For a general $k'<k$, the same type
of limiting distribution under conditions (A1)-(A3) holds. But the proof will
involve more tedious descriptions of how various merges can occur as discussed in \cite{Wang:Bickel:2017}.

For a sparse network, we have the following result.

\begin{corollary}\label{corollary:underfit:a}
Suppose that $A\sim P_{\theta^*, z^*}$, (A1), (A2) and (A$3'$) hold, and $n \rho_n/\log n \rightarrow \infty$.  If $k=\min\{o(n\rho_n/\log (n\rho_n)), o((n/\log n)^{1/2})\}$, we have
\[
(n^{-1}L_{k,k-1}-n\mu)/\sigma(\theta^*)  \stackrel{d}{\rightarrow}  N(0,1).
\]
\end{corollary}

\subsection{$k'=k$}

For fixed $k$, we establish the Wilks theorem for a dense network.
\begin{theorem}\label{theorem:fit:a}
Suppose that $A\sim P_{\theta^*, z^*}$, conditions (A1)-(A3) hold, and $\rho_n \equiv 1$. For fixed $k$,  we have
\[
2(\max_{z\in [k]^n}\sup_{\theta\in\Theta_k}\log f(A|\theta,z)-\log f(A|\theta^*,z^*)) \stackrel{d}{\rightarrow} \chi_{\frac{k(k+1)}{2}}^2.
\]
\end{theorem}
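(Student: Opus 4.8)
The plan is to proceed in two stages. First I would show that, at the true number of communities $k$, the profile-likelihood community assignment recovers the true assignment $z^*$ up to a relabeling of communities, with probability tending to one. Second, on that event I would reduce the left-hand side to a classical parametric log-likelihood ratio statistic and invoke Wilks' theorem.

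\textbf{Stage 1 (exact recovery of the community assignment).} Profiling out $\theta$ gives $\sup_{\theta\in\Theta_k}\log f(A|\theta,z)=\sum_{1\le a\le b\le k}n_{ab}(z)\,H\!\big(m_{ab}(z)/n_{ab}(z)\big)$, where $H(x)=x\log x+(1-x)\log(1-x)$. Write $\hat z=\arg\max_{z\in[k]^n}\sup_{\theta\in\Theta_k}\log f(A|\theta,z)$, and let $\sim$ denote equality up to a relabeling of communities. Taking expectations and applying Jensen's inequality---strict because the rows of $\theta^*$ are distinct (identifiability) and because (A1) forces every community to have size $\Theta(n)$---shows that $\E\big[\sup_\theta\log f(A|\theta,z)\big]$ is, to leading order in $n$, maximized exactly when $z$ induces the same partition as $z^*$, with a deficit of at least $c\,t\,n$ (for some $c>0$) whenever $z$ differs from a relabeling of $z^*$ at $t$ nodes. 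Bernstein's inequality for the Bernoulli sums $m_{ab}(z)$ controls the deviation of $\sum_{a\le b}n_{ab}(z)H(m_{ab}(z)/n_{ab}(z))$ from its mean; since this deficit grows linearly in $t$ while the number of assignments differing at exactly $t$ nodes is at most $(nk)^t$, a union bound over $[k]^n$ yields $\P(\hat z\sim z^*)\to1$. This is precisely the strong (exact-recovery) consistency of the profile-likelihood label estimator for dense stochastic block models, so one may instead simply quote it.

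\textbf{Stage 2 (classical Wilks).} Let $E_n=\{\hat z\sim z^*\}$, so $\P(E_n)\to1$. On $E_n$, permutation invariance of $\log f(A|\theta,z)$ under relabeling of communities gives $\max_{z\in[k]^n}\sup_{\theta\in\Theta_k}\log f(A|\theta,z)=\sup_{\theta\in\Theta_k}\log f(A|\theta,z^*)$, hence on $E_n$
\[
2\Big(\max_{z}\sup_{\theta}\log f(A|\theta,z)-\log f(A|\theta^*,z^*)\Big)=2\Big(\sup_{\theta\in\Theta_k}\log f(A|\theta,z^*)-\log f(A|\theta^*,z^*)\Big).
\]
Conditionally on $z^*$, the right-hand side is the log-likelihood ratio statistic for testing the simple null $\theta=\theta^*$ inside the $\frac{k(k+1)}{2}$-parameter model $\Theta_k$, based on observations that are independent across the $\frac{k(k+1)}{2}$ block pairs; by (A1) the number of edge variables in each block pair is of order $n^2$, and all $\theta^*_{ab}$ lie in the interior of $[0,1]$, as required for the Wilks regularity conditions. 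Thus consistency and asymptotic normality of the blockwise MLEs $\hat\theta_{ab}=m_{ab}/n_{ab}$, together with smoothness and nonsingular Fisher information of the Bernoulli family, are in force, and a second-order Taylor expansion of the log-likelihood ratio around $\theta^*$ gives convergence in distribution to $\chi^2_{k(k+1)/2}$. Since $\P(E_n)\to1$, Slutsky's theorem yields the claim.

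\textbf{Main obstacle.} Everything after exact recovery is a textbook blockwise Wilks argument; the substantive step is Stage~1, where one must show that $z^*$ strictly dominates \emph{every} competitor in the super-exponentially large set $[k]^n$ with high probability. This requires the population-level identification of the true partition from the confusion matrix (hence the distinctness of the rows of $\theta^*$) together with a concentration-plus-union-bound argument sharp enough that the exponential tail of the per-assignment fluctuation beats the $k^n$ cardinality --- equivalently, that the mean log-likelihood deficit from misassigning $t$ nodes, of order $tn$, dominates both the fluctuation (of order $\sqrt{tn}\cdot\mathrm{polylog}$) and the entropy term $t\log n$.
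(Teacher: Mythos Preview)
Your proposal is correct and follows essentially the same two-stage approach as the paper: first establish that the profile-likelihood maximizer over $z\in[k]^n$ equals the value at $z^*$ with probability tending to one (the paper does this in its Lemma~\ref{lemma:fit:a}, using the Wang--Bickel $G(R(z),\theta^*)$ machinery and a far/near split rather than your direct Bernstein-plus-union-bound sketch), and then reduce to a standard Wilks expansion at the fixed labeling $z^*$. The only cosmetic difference in Stage~2 is that the paper carries out the second-order Taylor expansion explicitly to obtain $\sum_{a\le b} n_{ab}(\hat\theta_{ab}-\theta_{ab}^*)^2/[\theta_{ab}^*(1-\theta_{ab}^*)]+o_p(1)$ and then applies the CLT, whereas you invoke the classical Wilks theorem as a black box; these are equivalent.
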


For increasing $k$, we have the following two results which will be used to establish the consistency of the CBIC. 

\begin{corollary}\label{corollary:fit:b}
Suppose that $A\sim P_{\theta^*, z^*}$, conditions (A1)-(A3) hold, and $\rho_n \equiv 1$.  If $k=o(n/\log n)$, we have
\[
2(\max_{z\in [k]^n}\sup_{\theta\in\Theta_k}\log f(A|\theta,z)-\log f(A|\theta^*,z^*))=O_p (k^2\log k).
\]
\end{corollary}

For a sparse network, we have the following result.

\begin{corollary}\label{corollary:fit:a}
Suppose that $A\sim P_{\theta^*, z^*}$, (A1), (A2) and (A$3'$) hold, and $n \rho_n/\log n \rightarrow \infty$. If $k=o(n\rho_n/\log (n\rho_n))$, we have
\[
2(\max_{z\in [k]^n}\sup_{\theta\in\Theta_k}\log f(A|\theta,z)-\log f(A|\theta^*,z^*))=O_p (\rho_nk^2\log k ).
\]
\end{corollary}

\subsection{$k'>k$}

As discussed in \cite{Wang:Bickel:2017}, it is difficult to obtain the asymptotic distribution of $L_{k,k'}$  in the case $k'>k$.
Instead, we obtain its asymptotic order, which is a generalization of Theorem 2.10 in \cite{Wang:Bickel:2017} to  increasing $k$.

\begin{theorem}\label{theorem:overfit:a}
Suppose that $A\sim P_{\theta^*, z^*}$, conditions (A1)-(A3) hold, and $\rho_n \equiv 1$. If $k=o(n^{1/2})$, we have
\[
\begin{array}{lll}
L_{k,k'}& \leq \alpha n\log k' +\sup_{\theta\in\Theta_k}\log f(A|\theta,z^*)-\log f(A|\theta^*,z^*)\\
&=\alpha n\log k' + O_p (k^2\log k)\\
&=\alpha n\log k'(1+ o_p (1)),
\end{array}
\]
\end{theorem}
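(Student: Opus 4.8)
The plan is to show that, with probability tending to one,
$\max_{z\in[k^+]^n}\sup_{\theta\in\Theta_{k^+}}\log f(A|\theta,z)\le \alpha n\log k^+ + \sup_{\theta\in\Theta_k}\log f(A|\theta,z^*)$
for a suitable constant $\alpha$; the first displayed inequality follows at once, and the second is a restatement once we note that $\sup_{\theta\in\Theta_k}\log f(A|\theta,z^*)-\log f(A|\theta^*,z^*)=O_p(1)$ by the finite-dimensional Wilks theorem (legitimate since each $n_{ab}$ grows like $n^2$ under (A1)), hence is $O_p(\tfrac{k(k+1)}{2}\log n)$. The first reduction is a common-refinement argument. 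For $z\in[k^+]^n$ let $w=z\vee z^*$ be the partition of $[n]$ whose blocks are the nonempty intersections of a $z$-block with a $z^*$-block; then $w$ refines $z^*$, splits each true block into at most $k^+$ pieces, and as $z$ varies it ranges over at most $(k^+)^n$ partitions. Every connectivity matrix constant on $z$-block pairs is also constant on $w$-block pairs, so the profile log-likelihood is monotone under refinement: $\sup_{\theta\in\Theta_{k^+}}\log f(A|\theta,z)\le\sup_\theta\log f(A|\theta,w)$. It therefore suffices to bound $\max_w\sup_\theta\log f(A|\theta,w)$ over these $w$.

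Next I would use an exact decomposition. Fix one such $w$; inside true block pair $(a,b)$ it splits the edge slots into sub-groups $g$, with empirical rates $\hat\theta_g=m_g/n_g$ and $\hat\theta_{ab}=m_{ab}/n_{ab}=\sum_g n_g\hat\theta_g/n_{ab}$. Additivity of the Bernoulli log-likelihood over slots, together with the identity $\ell_g(\hat\theta_g)-\ell_g(q)=n_g\,d(\hat\theta_g\|q)$, where $d(p\|q)=p\log\frac{p}{q}+(1-p)\log\frac{1-p}{1-q}$, gives
\[
\sup_\theta\log f(A|\theta,w)=\sup_{\theta\in\Theta_k}\log f(A|\theta,z^*)+\sum_{1\le a\le b\le k}\ \sum_{g\subset(a,b)}n_g\,d(\hat\theta_g\|\hat\theta_{ab}).
\]
Since $q\mapsto\sum_{g}n_g\,d(\hat\theta_g\|q)$ is convex with minimizer $q=\hat\theta_{ab}$, the excess term is at most $S_w:=\sum_{a\le b}\sum_{g\subset(a,b)}n_g\,d(\hat\theta_g\|\theta_{ab}^*)$, which is a sum of at most $(kk^+)^2$ independent Bernoulli log-likelihood-ratio statistics, the one attached to $g$ being built on $n_g$ i.i.d.\ $\mathrm{Bernoulli}(\theta_{ab}^*)$ slots.

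For the probabilistic step, the Chernoff / method-of-types bound gives $\P\big(n_g\,d(\hat\theta_g\|\theta_{ab}^*)>u\big)\le 2e^{-u}$, uniformly in $n_g$ and in $\theta_{ab}^*$, whence $\E\,e^{c_0\,n_g d(\hat\theta_g\|\theta_{ab}^*)}\le C_0$ for fixed $c_0\in(0,1)$ and $C_0<\infty$. By independence $\E\,e^{c_0 S_w}\le C_0^{(kk^+)^2}$, so $\P(S_w>\alpha n\log k^+)\le C_0^{(kk^+)^2}e^{-c_0\alpha n\log k^+}$, and a union bound over the at most $(k^+)^n$ refinements $w$ yields
\[
\P\Big(\max_w S_w>\alpha n\log k^+\Big)\ \le\ (k^+)^n\, C_0^{(kk^+)^2}\, e^{-c_0\alpha n\log k^+}\ \longrightarrow\ 0
\]
as soon as $\alpha>1/c_0$ (since $(k^+)^n e^{-c_0\alpha n\log k^+}=(k^+)^{(1-c_0\alpha)n}$). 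Combining the three steps gives $L_{k,k^+}\le \alpha n\log k^+ + \big(\sup_{\theta\in\Theta_k}\log f(A|\theta,z^*)-\log f(A|\theta^*,z^*)\big)$ with probability tending to one, which is the assertion.

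The main obstacle is the last step: the class of refinements carries metric entropy of order $n\log k^+$, so the per-$w$ concentration of $S_w$ must hold at exactly the scale $n\log k^+$ — precisely the quantity $\alpha n\log k^+$ being charged. What makes this possible is the rate-one sub-exponential tail of the Bernoulli likelihood-ratio statistic, valid uniformly over the (possibly tiny) sub-group sizes $n_g$ and over $\theta_{ab}^*$; only because of this uniformity does a sum of $O((kk^+)^2)$ such statistics keep a tail $e^{-c_0 t}$ with $c_0$ bounded away from zero, so that a large enough constant $\alpha$ defeats the $(k^+)^n$ union bound. The remaining ingredients — the monotonicity-under-refinement reduction, the exact decomposition, and the identity $\arg\min_q\sum_g n_g\,d(\hat\theta_g\|q)=\hat\theta_{ab}$ — are elementary.
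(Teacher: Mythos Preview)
Your argument is clean and takes a genuinely different route from the paper.  The paper works at the population scale: it introduces the function $G(R(z),\theta^*)$, shows that over $[k^+]^n$ it is maximized on the set $\nu_{k^+}$ of exact splittings of $z^*$, localizes via the set $I_{\delta_n}^+$, and then for $z$ near $\nu_{k^+}$ uses a directional-derivative bound $\partial G/\partial\epsilon|_{\epsilon=0^+}<-C$ to show that the profile likelihood drops by at least $Cmn$ when $z$ is at Hamming distance $m$ from the nearest splitting.  Summing over $m$ produces the $n\log k^+-Cn$ term, whence the claimed $\alpha\le 1-\tfrac{C}{\log k^+}+\tfrac{2\log n+\log k}{n\log k^+}$.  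You instead pass to the common refinement $w=z\vee z^*$, use the exact Bernoulli identity to write the excess as a sum of $n_g d(\hat\theta_g\|\theta_{ab}^*)$'s, and close with a sub-exponential moment bound and a union bound over $(k^+)^n$ refinements.  Your route is more transparent and self-contained; it avoids the population-vs-sample comparison and the rather opaque Lagrange-multiplier passage in the paper.

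The price you pay is the constant.  Your argument needs $\alpha>1/c_0$ with $c_0<1$, hence $\alpha>1$.  This is intrinsic to your Chernoff step: the tail $\P(n_g d(\hat\theta_g\|\theta^*)>u)\le 2e^{-u}$ is tight in the exponent, so $\E e^{c_0\,n_g d}$ blows up as $c_0\uparrow 1$ (for large $n_g$ it behaves like $(1-c_0)^{-1/2}$), and no choice of $c_0$ beats the $(k^+)^n$ union bound with $\alpha\le 1$.  The paper, by contrast, asserts $\alpha$ strictly below $1$, and this is exactly what is needed downstream: in Theorem~\ref{theorem:modelselection:a} and its Corollary the condition for consistency at $\lambda=1$ is $\alpha<1-\log k/\log k'$.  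With your $\alpha>1$ that corollary does not follow.  The reason the paper can claim $\alpha<1$ is that it never passes to the refinement $w$, which can have up to $kk^+$ blocks and therefore a strictly larger profile likelihood than the original $z\in[k^+]^n$; working directly with $z$ and exploiting the geometry of $G$ near $\nu_{k^+}$ buys the extra $-Cn$ in the exponent.  If you want to recover the paper's constant you would have to avoid the refinement step (which is where the slack enters) and control the profile likelihood over $[k^+]^n$ directly, e.g.\ by the same localization-plus-derivative scheme.
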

where $0<\alpha\leq1-\frac{C}{\log k'}+\frac{2\log n+\log k}{n\log k'}$.


For a sparse network, we have the following result.

\begin{corollary}\label{corollary:overfit:a}
Suppose that $A\sim P_{\theta^*, z^*}$, (A1), (A2) and (A$3'$) hold, and $n \rho_n/\log n \rightarrow \infty$. If $k=o((n/\rho_n)^{1/2})$, we have
\[
\begin{array}{lll}
L_{k,k'}& \leq \alpha n\log k' +\sup_{\theta\in\Theta_k}\log f(A|\theta,z^*)-\log f(A|\theta^*,z^*)\\
&=\alpha n\log k' + O_p (\rho_nk^2\log k)\\
&=\alpha n\log k'(1+ o_p (1)).
\end{array}
\]
\end{corollary}

\section{Consistency of the CBIC}
In this section, we establish the consistency of the CBIC in the sense that it chooses the correct $k$ with probability tending to one when $n$ goes to infinity.

To obtain the consistency of the CBIC, we  need an additional condition.

\label{section:modelselection}
(A4) (Consistency  condition) $n\mu/\log k\rightarrow-\infty$, for $k'<k$.

Note that $\mu \leq 0$ is clearly true asymptotically when $k'<k$ since it is the expectation of $\frac{1}{n^2} L_{k,k'}$. What we assume here is $\mu$ being bounded away from 0 or going to 0 at a rate slower than $\log k/n$.
\begin{theorem}\label{theorem:modelselection:a}
Suppose that $A\sim P_{\theta^*, z^*}$, (A1)-(A4) hold, and $\rho_n \equiv 1$. Let $\ell(k)$ be the penalized likelihood function for the CBIC, defined as in \eqref{eq:penalized:cbic}.  If $k=o((n/\log n)^{1/2})$,\\
for $k'<k$, we have
\[
P(\ell(k')>\ell(k))\rightarrow0;
\]
for $k'>k$, when $\lambda> (\alpha \log k')/ (\log k'-\log k)$, we have
\[
P(\ell(k')>\ell(k))\rightarrow0,
\]
where $\alpha$ is given in Theorem \ref{theorem:overfit:a}.
\end{theorem}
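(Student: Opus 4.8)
The plan is to split the argument according to whether the candidate $k'$ underfits or overfits, and in each case to control $\ell(k') - \ell(k)$ by combining the order estimates from Section~\ref{section:2} with the explicit form of the penalty in \eqref{eq:penalized:cbic}. Write
\[
\ell(k') - \ell(k) = L_{k,k'} - \Bigl(\lambda n\log k' + \tfrac{k'(k'+1)}{2}\log n\Bigr) + \Bigl(\lambda n\log k + \tfrac{k(k+1)}{2}\log n\Bigr),
\]
so that the event $\{\ell(k') > \ell(k)\}$ is the event that $L_{k,k'}$ exceeds the net penalty difference $\lambda n(\log k' - \log k) + O(\log n)$. The strategy is to show that for $k'<k$ the left side is negative with probability tending to one because $L_{k,k'}$ is dominated by the term $n^2\mu$, which under (A2) diverges to $-\infty$ faster than the penalty grows; and for $k'>k$ the left side is negative because the penalty gap $\lambda n(\log k' - \log k)$ strictly dominates the upper bound $\alpha n\log k'$ for $L_{k,k'}$ from Theorem~\ref{theorem:overfit:a}, precisely under the stated condition on $\lambda$.

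For the underfitting case $k' < k$, first reduce to $k' = k-1$: by monotonicity of the maximized likelihood in the number of fitted blocks and an induction on repeated merges (invoking the remark after Theorem~\ref{theorem:underfit:a} that the same type of limiting behavior holds for general $k^- < k$ under (A1)), it suffices to bound $L_{k,k-1}$. By Theorem~\ref{theorem:underfit:a}, $n^{-1}L_{k,k-1} = n\mu + O_p(1)$, hence $L_{k,k-1} = n^2\mu + O_p(n)$. The penalty difference here is $\lambda n(\log k - \log(k-1)) + O(\log n) = O(n)$, so
\[
\ell(k') - \ell(k) = n^2\mu + O_p(n).
\]
Condition (A2) says $n\mu \to -\infty$, so $n^2\mu = n \cdot (n\mu) \to -\infty$ strictly faster than the $O_p(n)$ fluctuation and the $O(n)$ penalty gap; therefore $P(\ell(k') > \ell(k)) \to 0$. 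One should note that $\mu \le 0$ always (it is a sum of terms $-\tfrac{1}{2n^2}n_{ab}\,\mathrm{KL}(\theta^*_{ab}\,\|\,\theta'_{u(a)u(b)}) \le 0$ by Gibbs' inequality), so the only content of (A2) is that the merged model is genuinely misspecified at rate faster than $1/n$; this is where (A2) is essential and cannot be dropped.

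For the overfitting case $k' > k$, apply Theorem~\ref{theorem:overfit:a}: $L_{k,k^+} \le \alpha n\log k^+ + O_p\bigl(\tfrac{k(k+1)}{2}\log n\bigr)$ with $0 < \alpha \le 1$. Then
\[
\ell(k') - \ell(k) \le \alpha n\log k' - \lambda n(\log k' - \log k) + O_p(\log n)
= -\bigl[\lambda(\log k' - \log k) - \alpha \log k'\bigr]\, n + O_p(\log n).
\]
The bracketed coefficient is strictly positive exactly when $\lambda > \alpha \log k' / (\log k' - \log k)$, which is the hypothesis; hence the leading term is $-cn$ for some $c>0$, dominating the $O_p(\log n)$ remainder, and $P(\ell(k') > \ell(k)) \to 0$. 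Finally, since the number of candidate values of $k'$ under consideration is finite (bounded independently of $n$), a union bound over all $k' \ne k$ gives $P(\hat k \ne k) \to 0$, completing the consistency claim.

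The main obstacle is the reduction step in the underfitting case: Theorems~\ref{theorem:underfit:a} is stated only for $k' = k-1$, and extending the order estimate $L_{k,k'} = n^2\mu_{k'} + O_p(n)$ to arbitrary $k' < k$ requires either the uniqueness/identifiability assumptions flagged in the remark or a careful argument that the best $k'$-block approximation to $\theta^*$ still has strictly negative (and, via (A2), $o(1/n)$-separated) Kullback--Leibler defect. Making the bookkeeping of iterated merges precise — so that $\mu$ is replaced by the appropriate $\mu_{k'}$ and (A2) is correctly interpreted as a statement about every coarsening of the true partition — is the delicate part; the rest is arithmetic comparison of an $n^2$ or $n$ term against an $O_p(\log n)$ term.
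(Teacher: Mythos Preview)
Your approach is essentially the paper's: compare the penalty gap against the order of $L_{k,k'}$ from Theorems~\ref{theorem:underfit:a} and~\ref{theorem:overfit:a}, and in the underfitting case use (A2) to make the leading $n^2\mu$ term dominate. One correction, though: your opening identity
\[
\ell(k') - \ell(k) = L_{k,k'} - \Bigl(\lambda n\log k' + \tfrac{k'(k'+1)}{2}\log n\Bigr) + \Bigl(\lambda n\log k + \tfrac{k(k+1)}{2}\log n\Bigr)
\]
is not literally true, because $\ell(k)$ involves $\max_{z\in[k]^n}\sup_\theta\log f(A\mid\theta,z)$, not $\log f(A\mid\theta^*,z^*)$; the correct right-hand side has an additional $-L_{k,k}$. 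The paper handles this by invoking Theorem~\ref{theorem:fit:a} (Wilks), which gives $L_{k,k}=O_p(1)$, so the missing term is absorbed into your $O_p(n)$ (underfitting) or $O_p(\log n)$ (overfitting) remainders and the rest of your argument goes through unchanged. Your discussion of the reduction from general $k'<k$ to $k'=k-1$ is, if anything, more careful than the paper's, which simply asserts the extension under the uniqueness/identifiability caveat.
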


For sparse networks, we have the following results.
\begin{corollary}\label{corollary:modelselection:a}
Suppose that $A\sim P_{\theta^*, z^*}$, (A1), (A2), (A$3'$), (A$4$) hold, and $n \rho_n/\log n \rightarrow \infty$. Let $\ell(k)$ be the penalized likelihood function for the CBIC, defined as in \eqref{eq:penalized:cbic}. If $k=\min\{o(n\rho_n/\log (n\rho_n)), o( (n/\log n)^{1/2})\}$,\\
for $k'<k$, we have
\[
P(\ell(k')>\ell(k))\rightarrow0;
\]
for $k'>k$, when $\lambda> (\alpha \log k')/ (\log k'-\log k)$, we have
\[
P(\ell(k')>\ell(k))\rightarrow0.
\]
\end{corollary}

By Theorem \ref{theorem:modelselection:a},
the probability $P(\ell(k')>\ell(k))$ goes to zero, regardless of the value of the tuning parameter $\lambda$ in the case of $k'<k$.
When $k'>k$, it depends on the parameter $\lambda$. Then a natural question is whether  $\lambda=1$ is  a good choice. Note that it also depends on $\alpha$.
With an appropriate $\alpha$, the probability $P(\ell(k')>\ell(k))$ also goes to zero when $\lambda=1$ for fixed $k$ as demonstrated in
the following corollary.

\begin{corollary}
Suppose that $A\sim P_{\theta^*, z^*}$, (A1)-(A4) hold, and $\rho_n \equiv 1$. Let $\ell(k)$ be the penalized likelihood function for the CBIC, defined as in \eqref{eq:penalized:cbic}. If $k$ is fixed, \\
for $k'<k$, we have
\[
P(\ell(k')>\ell(k))\rightarrow0;
\]
for $k'>k$, suppose $\alpha<1-\frac{\log k}{\log k'}$, for $\lambda=1$, we have
\[
P(\ell(k')>\ell(k))\rightarrow0.
\]
\end{corollary}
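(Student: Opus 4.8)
The plan is to derive the corollary directly from Theorem~\ref{theorem:modelselection:a} by verifying that the hypothesis $\alpha < 1 - \tfrac{\log k}{\log k'}$ is exactly what makes $\lambda = 1$ satisfy the inequality $\lambda > (\alpha \log k')/(\log k' - \log k)$ required there. The case $k' < k$ is immediate: Theorem~\ref{theorem:modelselection:a} already states $P(\ell(k') > \ell(k)) \to 0$ for $k' < k$ with no constraint on $\lambda$, so there is nothing to prove beyond citing it. So the only real content is the $k' > k$ case.

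For $k' > k$, first I would record that since $k' > k \geq 1$ we have $\log k' > \log k \geq 0$, so $\log k' - \log k > 0$ and division by it is harmless. Starting from the assumed bound $\alpha < 1 - \tfrac{\log k}{\log k'} = \tfrac{\log k' - \log k}{\log k'}$, multiply both sides by $\tfrac{\log k'}{\log k' - \log k} > 0$ to obtain
\[
\frac{\alpha \log k'}{\log k' - \log k} < 1.
\]
Hence $1 > (\alpha \log k')/(\log k' - \log k)$, i.e. $\lambda = 1$ satisfies the threshold condition $\lambda > (\alpha \log k')/(\log k' - \log k)$ of Theorem~\ref{theorem:modelselection:a}. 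Applying that theorem with this $\lambda$ gives $P(\ell(k') > \ell(k)) \to 0$, which is the claim.

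There is no substantive obstacle here; the corollary is a bookkeeping specialization of the theorem. The one point worth a sentence of care is the edge case $k = 1$, where $\log k = 0$ and the condition reduces to $\alpha < 1$; this is consistent with the range $0 < \alpha \leq 1$ from Theorem~\ref{theorem:overfit:a}, so the hypothesis is nonvacuous. One should also note that for $k \geq 2$ the requirement $\alpha < 1 - \tfrac{\log k}{\log k'}$ is genuinely more restrictive than $\alpha \leq 1$ and does constrain how severely overfitted models can behave, but Theorem~\ref{theorem:overfit:a} permits $\alpha$ to be taken as small as $1 - \tfrac{C}{\log k'} + \tfrac{2\log n + \log k}{n \log k'}$, which for large $n$ and a suitable constant $C$ can be made to fall below $1 - \tfrac{\log k}{\log k'}$; I would remark on this to confirm the hypothesis is attainable rather than prove it in detail.
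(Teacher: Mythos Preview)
Your proposal is correct and matches the paper's treatment: the paper states this corollary immediately after Theorem~\ref{theorem:modelselection:a} without a separate proof, treating it as the obvious specialization obtained by checking that $\alpha < 1 - \tfrac{\log k}{\log k'}$ is equivalent to $1 > (\alpha\log k')/(\log k'-\log k)$. Your algebraic verification of this equivalence and the brief discussion of the edge case $k=1$ are fine additions but not required.
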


By checking the proof of Theorem \ref{theorem:modelselection:a}, it is not difficult to see that for $k'>k$, $P(\tilde{\ell}(k')>\tilde{\ell}(k))\rightarrow1.$ This implies that the BIC tends to overestimate the number of communities $k$ for stochastic block models.

\section{Extension to a degree-corrected SBM}
\label{section:extension}
Real-world networks often include a number of high degree ``hub'' nodes that have many connections \citep{Barabasi:Bonabau:2003}.
To incorporate the degree heterogeneity within communities, the degree corrected stochastic block model (DCSBM)
was proposed by \cite{Karrer:Newman:2011}. Specifically,
this model assumes that $P(A_{ij}=1\mid z(i),z(j))=\omega_i\omega_j\theta_{z(i)z(j)}$, where
$\mathbf{\omega}=(\omega_i)_{1\leq i \leq n}$ are a set of node degree parameters measuring
the degree variation within blocks. For identifiability of the model, the constraint $\sum_i\omega_i\mathbf{1}\{z(i)=a\}=n_a$ can be imposed for each community $1\leq a\leq k$.

As in \cite{Karrer:Newman:2011}, we replace the Bernoulli random variables $A_{ij}$ by the Poisson random variable.
As discussed in \cite{Zhao:Levina:Zhu:2012}, there is no practical difference with respect to performance.
The reason is that the Bernoulli distribution with a small mean is well approximated by the Poisson distribution. An advantage of using Poisson distributions is that it will greatly simplify the calculations.
Another advantage is that it will allow networks to contain both multi-edges and self-edges.

For any fixed $(\theta,\omega, z)$, the log-likelihood
of observing the adjacency matrix $A$ under the degree corrected stochastic block model is
\[
\log f(A|\theta,\omega,z)=\sum_{1\leq i\leq n}d_i\log\omega_i+\sum_{1\leq a\leq b\leq k}(m_{ab}\log\theta_{ab}-n_{ab}\theta_{ab}),
\]
where $d_i=\sum_{1\leq j\leq n}A_{ij}$.

We first consider the case $\omega$ is known, which was also assumed by \cite{Lei:2016} and \cite{Gao:Ma:Zhang:Zhou:2016} in their theoretical analyses. With similar arguments, one can show that the previous Theorems \ref{theorem:underfit:a} and \ref{theorem:overfit:a}  still hold in the DCSBM.
Although Theorem \ref{theorem:fit:a} does not hold in the DCSBM, we have the following result.

\begin{theorem}\label{theorem:extension}
Suppose that $A\sim P_{\theta^*, z^*}$, (A1)-(A3) hold, and $\rho_n \equiv 1$. If $k=o(n/\log n)$, we have
\[\max_{z\in [k]^n}\sup_{\theta\in\Theta_k}\log f(A|\theta,\omega,z)-\log f(A|\theta^*,\omega,z^*)=O_p(k^2\log k).\]
\end{theorem}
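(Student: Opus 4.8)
The plan is to mimic the argument behind the Wilks-type Theorem \ref{theorem:fit:a}, but working with the Poisson log-likelihood of the DCSBM and keeping track of the fact that the degree parameters $\omega$ are fixed and known. Write $N(z)$ for the event that the fitted assignment $z$ agrees with $z^*$ up to a relabeling of the $k$ blocks. The first step is to show that the contribution of ``bad'' assignments is negligible: for any $z$ not in $N(z^*)$, the probability that $\sup_{\theta\in\Theta_k}\log f(A|\theta,\omega,z)$ exceeds $\log f(A|\theta^*,\omega,z^*)$ is exponentially small in $n$, exactly as in \cite{Wang:Bickel:2016} and as already used implicitly in the proof of Theorem \ref{theorem:fit:a}. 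This reduction uses condition (A1) (each true block has size $\geq C_1 n$), a union bound over the at most $k^n$ assignments, and a large-deviation estimate for the Poisson counts $m_{ab}$; the identifiability constraint $\sum_i\omega_i\mathbf 1\{z(i)=a\}=n_a$ plays no essential role here since a misclassification creates a first-order drop in the profile log-likelihood. After this step it suffices to analyze $\sup_{\theta\in\Theta_k}\log f(A|\theta,\omega,z^*)-\log f(A|\theta^*,\omega,z^*)$.

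Second, with $z=z^*$ fixed, observe that the term $\sum_i d_i\log\omega_i$ in the Poisson log-likelihood does not depend on $\theta$, so it cancels in the difference; what remains is
\[
\sup_{\theta\in\Theta_k}\sum_{1\le a\le b\le k}\big(m_{ab}\log\theta_{ab}-n_{ab}\theta_{ab}\big)-\sum_{1\le a\le b\le k}\big(m_{ab}\log\theta^*_{ab}-n_{ab}\theta^*_{ab}\big).
\]
The supremum is attained at the explicit MLE $\hat\theta_{ab}=m_{ab}/n_{ab}$, which gives a closed-form expression; substituting, the difference becomes a sum over the $\tfrac{k(k+1)}{2}$ block pairs of $n_{ab}$ times a Kullback--Leibler-type discrepancy between $\hat\theta_{ab}$ and $\theta^*_{ab}$. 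The third step is then a standard second-order Taylor (delta-method) expansion of each summand around $\theta^*_{ab}$: since $\E m_{ab}=n_{ab}\theta^*_{ab}$ (here is where the Poisson mean structure together with the identifiability constraint on $\omega$ is used to get the right expectation) and $n_{ab}\asymp n^2$ by (A1), the statistic $(m_{ab}-n_{ab}\theta^*_{ab})/\sqrt{n_{ab}\theta^*_{ab}}$ is $O_p(1)$, so each KL term contributes $O_p(1)$, and the whole difference is $O_p\!\big(\tfrac{k(k+1)}{2}\big)=O_p\!\big(\tfrac{k(k+1)}{2}\log n\big)$. (One could in fact push this to a $\chi^2$ limit if $\omega$ were constant on blocks, which is why Theorem \ref{theorem:fit:a} fails in general but the order bound survives.)

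The main obstacle is the first step — controlling the supremum over all $z\notin N(z^*)$ in the presence of the known weights $\omega_i$. One must verify that the large-deviation bound for the profiled Poisson likelihood is uniform enough to beat the $k^n$ assignments, and in particular that the weighting does not let a near-miss assignment recover almost all of the likelihood; the cleanest route is to lower-bound, for each misclassified node, the resulting decrement in the profile log-likelihood by a positive constant depending only on $C_1$ and on $\min_{a\le b}|\theta^*_{ab}-\theta^*_{a'b'}|$ over distinct merged blocks, using the concavity of $t\mapsto m\log t-nt$ and Bernstein's inequality for the Poisson sums, and then invoke the separation hypothesis implicit in identifiability. Once that uniform control is in hand, the remaining Taylor-expansion bookkeeping is routine and parallels the appendix proof of Theorem \ref{theorem:fit:a}.
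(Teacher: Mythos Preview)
Your proposal is correct and follows essentially the same route as the paper: first reduce the maximum over $z$ to $z=z^*$ (the paper does this by invoking the analogue of Lemma~\ref{lemma:fit:a}, glossing over exactly the obstacle you flag), then Taylor-expand the Poisson profile log-likelihood at $\hat\theta_{ab}=m_{ab}/n_{ab}$ about $\theta^*_{ab}$ to obtain $\sum_{a\le b} n_{ab}(\hat\theta_{ab}-\theta^*_{ab})^2/\theta^*_{ab}+o_p(1)$. The only cosmetic difference is that the paper finishes by sandwiching this Poisson quadratic form between constant multiples of the Bernoulli form $\sum_{a\le b} n_{ab}(\hat\theta_{ab}-\theta^*_{ab})^2/[\theta^*_{ab}(1-\theta^*_{ab})]$ already controlled in Theorem~\ref{theorem:fit:a}, whereas you argue directly that each summand is $O_p(1)$; both yield the stated order.
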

Therefore, Theorem \ref{theorem:modelselection:a} still holds in the DCSBM.

If $\omega_i$'s are unknown, we use a plug-in method. That is, we need to estimate $\omega_i$'s.
After imposing the identifiability constrain on $\mathbf{\omega}$, the MLE of the parameter $\omega_i$
is given by $\hat{\omega}_i=n_ad_i/\sum_{j:z_j=z_i}d_j$.
Simulation studies indicate that the CBIC can estimate $k$ with high accuracy for the DCSBM.

\section{Experiments}
\label{section:experiments}

\subsection{Algorithm}
Since there are $k^n$ possible assignments for the communities,
it is intractable to directly optimize the log-likelihood of the SBM. Since the primary goal of our article is to study the penalty function, we use a computationally feasible algorithm -- spectral clustering to estimate the community labels for a given $k$.

The algorithm finds the eigenvectors $u_1,\ldots,u_k$ associated with the $k$ eigenvalues of the Laplacian matrix that are largest in magnitude, forming an $n\times k$ matrix $U=(u_1,\ldots,u_k)$, and then applies the $k$-means algorithm to the rows of $U$. For details, see \cite{Rohe:Chatterjee:Yu:2011}.
They established the consistency of spectral clustering in the stochastic block model under proper conditions on the density of the network and the eigen-structure of the Laplacian matrix.

For the DCSBM, we apply a variant of spectral clustering, called spectral clustering on ratios-of-eigenvectors (SCORE) proposed by \cite{Jin:2015}.
Instead of using the Laplacian matrix, the SCORE collects the eigenvectors $v_1,\ldots,v_k$ associated with the $k$ eigenvalues of $A$ that are largest in magnitude, and then forms the $n\times k$ matrix $V=(\mathbf{1},v_2/v_1,\ldots,v_k/v_1)$. The SCORE then applies the $k$-means algorithm to the rows of $V$. The corresponding consistency results for the DCSBM were also established by \cite{Jin:2015}.

We restrict our attention to candidate values for the true number of communities in the range $k'\in\{1,\ldots,18\}$, both in simulations and the real data analysis.

\subsection{Simulations}

\begin{figure}
\centering
\includegraphics[width=2.8in,height=2.2in]{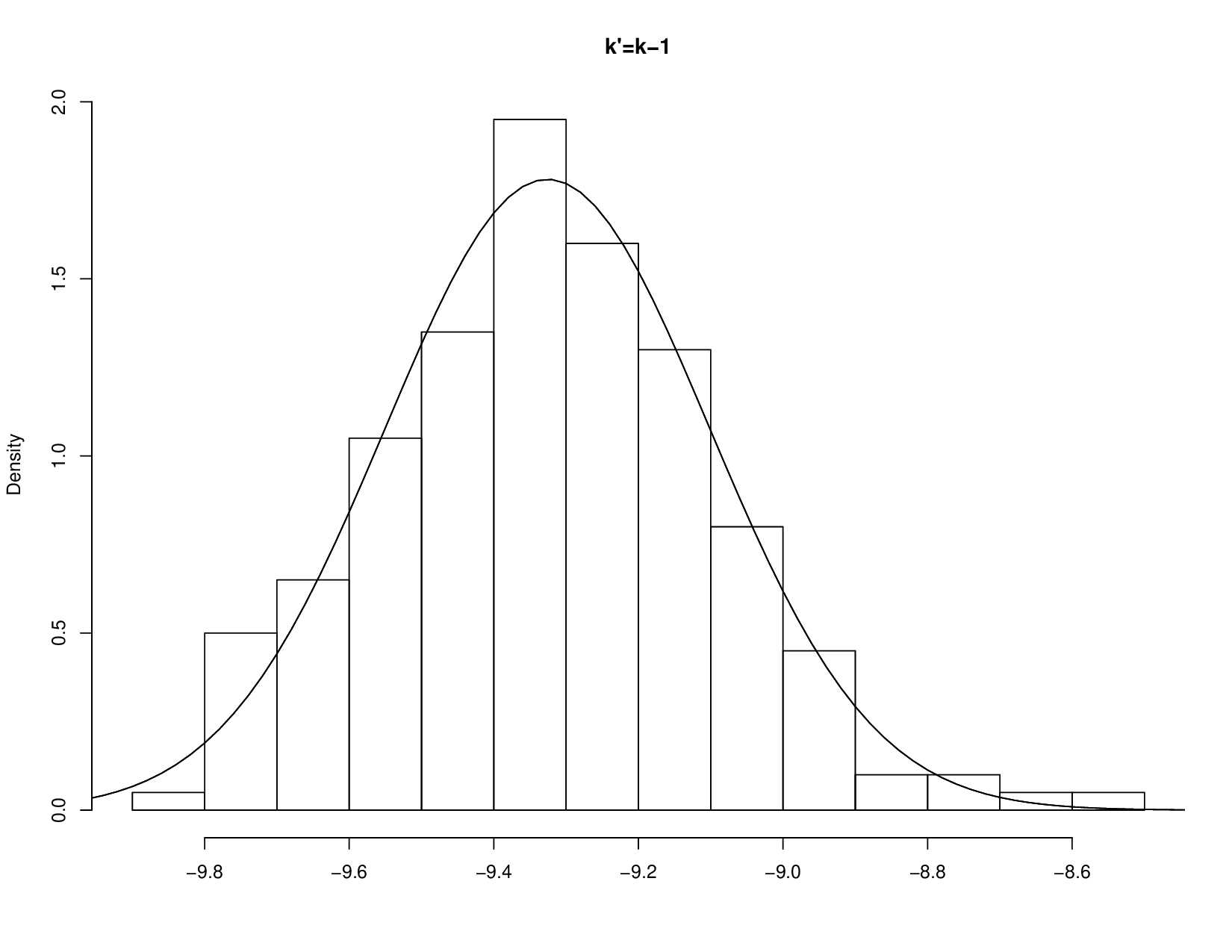}
\caption{Emipirical distribution of $n^{-1}L_{k,k-1}$.
The solid curve is normal density with mean $n\mu$ and $\sigma(\theta^*)$ as given in Theorem \ref{theorem:underfit:a}. }
\label{figure-a}
\end{figure}

\begin{figure}
\centering
\includegraphics[width=2.8in,height=2.2in]{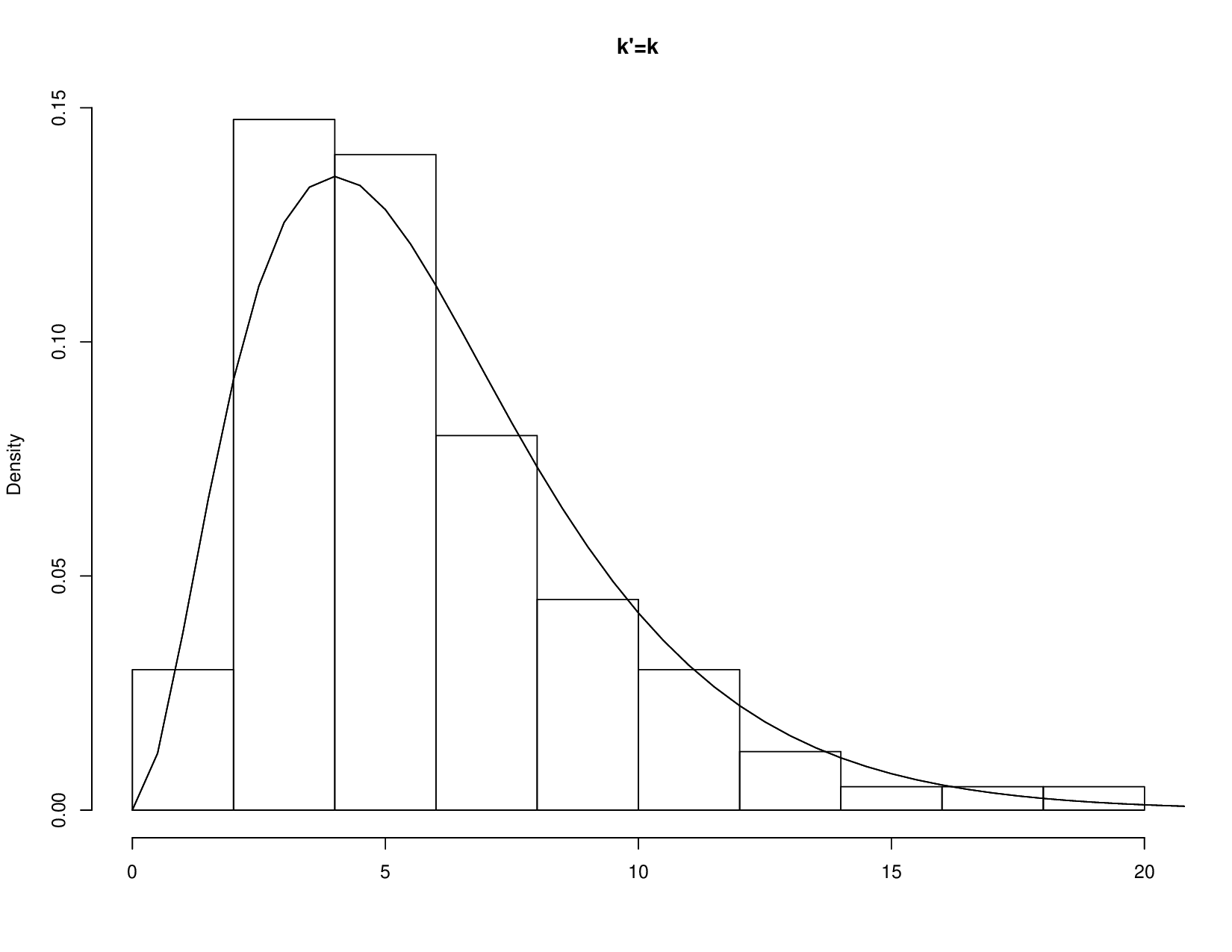}
\caption{Emipirical distribution of $2L_{k,k}$.  The solid curve is chi-square density with degree $\frac{k(k+1)}{2}=6$.  }
\label{figure-b}
\end{figure}

\begin{figure}
\centering
\includegraphics[width=2.8in,height=2.2in]{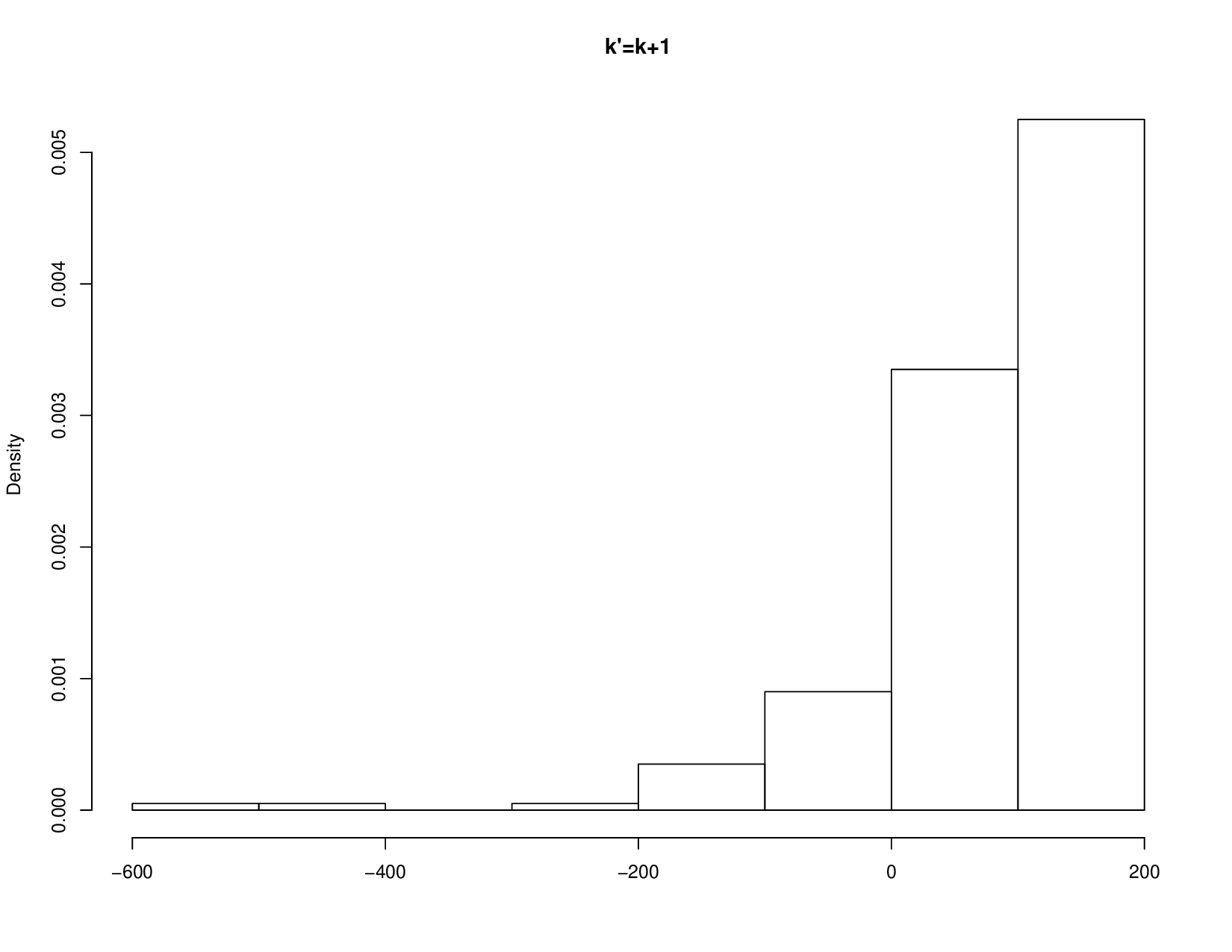}
\caption{Emipirical distribution of $2L_{k,k+1}$. }
\label{figure-c}
\end{figure}

\emph{Simulation 1.} In the SBM setting, we first compare the empirical distribution of the log-likelihood ratio with the asymptotic results in Theorems \ref{theorem:underfit:a},  \ref{theorem:fit:a} and \ref{theorem:overfit:a}.
We set the network size as $n=500$ and the probability matrix $\theta^*_{ab}=0.03(1+5\times\mathbf{1}(a=b))$. We set $k=3$ with $\pi_1=\pi_2=\pi_3=1/3$. Each simulation in this section is repeated 200 times.
The plot for $n^{-1}L_{k,k-1}$ is shown in Figure \ref{figure-a}.
The empirical distribution is well approximated by the normal distribution in the case of underfitting.
Figure \ref{figure-b} plots the empirical distribution of $2L_{k,k}$ in the case of $k'=k$.
The distribution also matches the chi-square distribution well.
Figure \ref{figure-c} plots the empirical distribution of $L_{k,k+1}$.

\begin{figure}
\centering
\includegraphics[width=3in,height=2.4in]{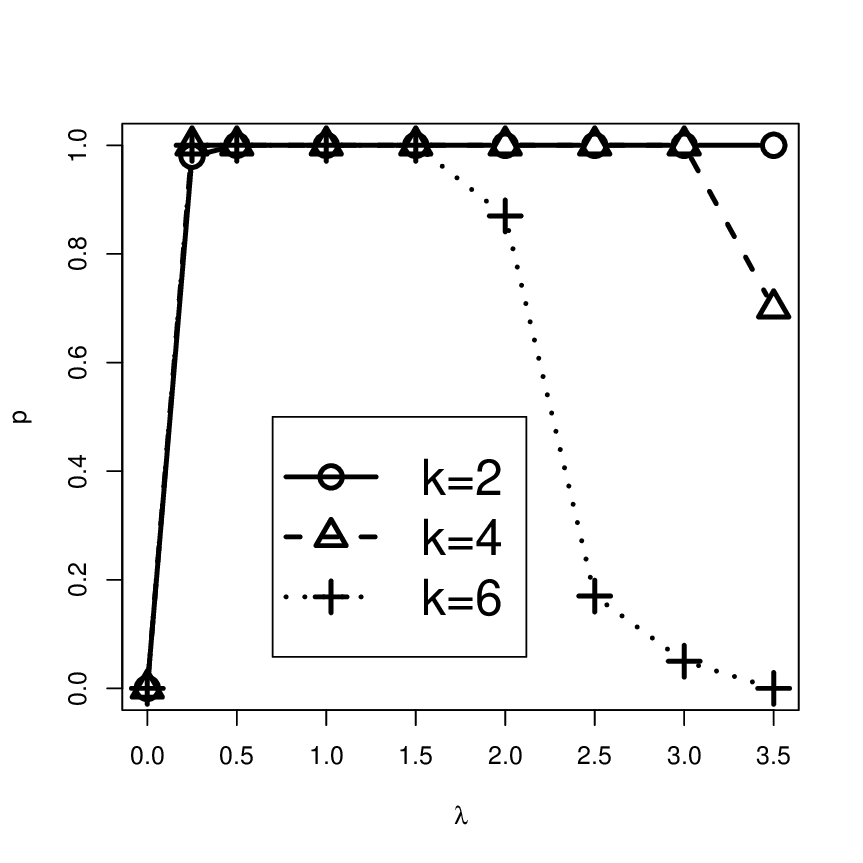}
\caption{Success rate vs $\lambda$. }
\label{figure-lambda}
\end{figure}

\emph{Simulation 2.}
In the SBM setting, we investigate how the accuracy of the CBIC changes as the tuning parameter $\lambda$ varies.
We let $\lambda$ increase from $0$ to $3.5$.
The probability matrix is the same as in Simulation 1.
We set each block size according to the sequence $(60,90,120,150,60,90,120,150)$.
That is, if $k=1$, we set the network size $n$ to be $60$; if $k=2$, we set two respective block sizes
to be $60$ and $90$; and so forth.
This setting is the same as in \cite{Saldana:Yu:Feng:2017}.
As can be seen in Figure \ref{figure-lambda}, the rate of the successful recovery of the number of communities is very low when $\lambda$ is close to zero.
When $\lambda$ is between $0.5$ and $1.5$, the success rate is almost $100\%$;
When $\lambda$ becomes larger, the success rate decreases in which the change point depends on $k$. It can be seen from Figure \ref{figure-lambda} that $\lambda=1$ is a safe tuning parameter.

\begin{table}[h!]
	\caption{Comparison of model selection methods for SBM: $r=5$}
	\label{tab:1a}
	\centering
	\scalebox{0.7}{
		\begin{tabular}{ccc|cc|cc|cc|cc|cc|cc}
			\\
			\hline
			& \multicolumn{ 2}{c|}{CBIC ($\lambda=1/4$)}	& \multicolumn{ 2}{c|}{CBIC ($\lambda=1/2$)}& \multicolumn{ 2}{c|}{CBIC ($\lambda=1$)}&          \multicolumn{ 2}{c|}{BIC }&         \multicolumn{ 2}{c|}{Lei (2016)}&         \multicolumn{ 2}{c|}{ICL} &  \multicolumn{ 2}{c}{PLH} \\
			\hline
			&       Prob &   Mean   &   Prob &   Mean   &Prob& Mean &Prob& Mean &Prob& Mean &Prob& Mean &Prob& Mean \\
			\hline
			$k=2$  &  0.92 & 2.10   &1.00  & 2.00             &1.00  & 2.00  & 0.24 & 3.15 &1.00     & 2.00    &1.00  & 2.00 &0.00 & 10.48\\
			$k=3$ & 1.00 & 3.01    &1.00  & 3.00             &1.00  & 3.00  & 0.63 & 3.57 &0.99  & 3.03 &1.00  & 3.00 & 0.02 & 9.02 \\
			$k=4$ & 1.00 & 4.00    &1.00  & 4.00              &1.00  & 4.00  & 0.78 & 4.34 &0.97  & 4.02 &1.00  & 4.00 &0.15 & 7.17 \\
			$k=5$ & 1.00 & 5.00    &1.00  & 5.00              &1.00  & 5.00  & 0.94 & 5.08 &0.86  & 4.86 &1.00  & 5.00 &0.73 &  5.70  \\
			$k=6$ &  1.00 & 6.00    &1.00  & 6.00              &1.00  & 6.00  &  0.97 & 6.03   &0.65  & 5.65 &1.00  & 6.00 &1.00 & 6.00 \\
			$k=7$ &  1.00 & 7.00    &1.00  & 7.00              &1.00  & 7.00  & 1.00 & 7.00   &0.21  & 6.21 &1.00  & 7.00 &1.00 & 7.00  \\
			$k=8$ & 1.00 & 8.00    &1.00  & 8.00              &1.00  & 8.00  & 1.00 & 8.00   &0.16  & 7.16 &1.00  & 8.00 &1.00 & 8.00 \\
			
			\hline
			
		\end{tabular}
	}
\end{table}

\begin{table}[h!]
	\caption{Comparison of model selection methods for SBM: $r=4$}
	\label{tab:1b}
	\centering
	\scalebox{0.7}{
		\begin{tabular}{ccc|cc|cc|cc|cc|cc|cc}
			\\
			\hline
			& \multicolumn{ 2}{c|}{CBIC ($\lambda=1/4$)}	& \multicolumn{ 2}{c|}{CBIC ($\lambda=1/2$)}& \multicolumn{ 2}{c|}{CBIC ($\lambda=1$)}&          \multicolumn{ 2}{c|}{BIC }&         \multicolumn{ 2}{c|}{Lei (2016)}&         \multicolumn{ 2}{c|}{ICL}  & \multicolumn{ 2}{c}{PLH} \\
			\hline
			&     Prob &   Mean   &    Prob &   Mean   &  Prob &   Mean   &Prob& Mean &Prob& Mean &Prob& Mean &Prob& Mean \\
			\hline
			$k=2$ & 0.94 & 2.08     & 1.00 & 2.00 & 1.00 & 2.00     & 0.37 & 2.92 &1.00  & 2.00 &1.00  & 2.00 & 0.00 & 10.35\\
			$k=3$ & 0.98 & 3.02    &1.00  & 3.00  & 1.00 & 3.00     & 0.73 & 3.36 &0.95  & 3.04 &1.00  & 3.00 &0.03 & 8.68 \\
			$k=4$ & 1.00 & 4.00    &1.00  & 4.00  & 1.00 & 4.00     & 0.84 & 4.26 &0.62  & 4.31 &1.00  & 4.00 &0.13 & 7.06\\
			$k=5$ & 1.00 & 5.00    &1.00  & 5.00  & 0.91 & 4.91     & 0.98 & 5.04 & 0.08 & 4.50 &0.98 & 4.99 & 0.40 & 6.71 \\
			$k=6$ & 1.00 & 6.00    &1.00  & 6.00  &  0.90 & 5.91     & 0.99 & 6.01  & 0.02  & 5.14 &0.99 & 5.99 & 0.82 & 6.38  \\
			$k=7$ & 1.00 & 7.00    &1.00   & 7.00  & 0.85 & 6.85       & 1.00 & 7.00 &0.04  & 6.00 &0.96 & 6.96 &1.00 & 7.00  \\
			$k=8$ &  0.99 & 8.00    &0.97   & 7.97  & 0.70 & 7.70       & 1.00 & 8.01 &0.05  & 6.40 & 0.86 & 7.86  &0.98 & 8.00 \\
			\hline
			
		\end{tabular}
	}
\end{table}

\begin{table}[h!]
	\caption{Comparison of model selection methods for SBM: $r=3$}
	\label{tab:1c}
	\centering
	\scalebox{0.7}{
		\begin{tabular}{ccc|cc|cc|cc|cc|cc|cc}
			\\
			\hline
			& \multicolumn{ 2}{c|}{CBIC ($\lambda=1/4$)} & \multicolumn{ 2}{c|}{CBIC ($\lambda=1/2$)}& \multicolumn{ 2}{c|}{CBIC ($\lambda=1$)}&          \multicolumn{ 2}{c|}{BIC }&         \multicolumn{ 2}{c|}{Lei (2016)}&         \multicolumn{ 2}{c|}{ICL}  & \multicolumn{ 2}{c}{PLH}\\
			\hline
			&        Prob &   Mean   &             Prob &   Mean   &           Prob &   Mean   & Prob& Mean &Prob& Mean &Prob& Mean &Prob& Mean\\
			\hline
			$k=2$     & 0.91 & 2.11              &1.00  & 2.00      & 1.00 & 2.00          & 0.33 & 3.07 &0.99  & 2.01& 1.00 & 2.00 & 0.00 & 10.03 \\
			$k=3$      & 1.00 & 3.00              &1.00  & 3.00       & 0.99 & 2.99         & 0.80  & 3.31 &0.47 & 3.00 & 1.00 & 3.00 & 0.02 &  9.35 \\
			$k=4$    & 0.99 & 4.01               &1.00   & 4.01 &  0.96 & 3.96      & 0.92 & 4.14 &0.15  & 3.33 &0.98 & 3.99 & 0.42 & 5.87\\
			$k=5$   & 0.90 & 4.92                &0.73  & 4.74  & 0.30 & 4.24      & 0.88 & 5.04 &0.13  & 3.67 &0.44 & 4.41 & 0.54 & 6.25\\
			$k=6$   & 0.55 & 5.64                & 0.45  & 5.47  & 0.14 & 4.93     & 0.57 & 5.85  &0.07  & 4.17 &0.20 & 4.95 & 0.49 & 6.65\\
			$k=7$ & 0.28 & 6.28                   &0.19 &  6.09  & 0.06 & 5.66      & 0.32 & 6.53 &0.00  & 4.89 &0.06 & 5.58 & 0.31 & 7.03\\
			$k=8$  & 0.12 & 6.86                 &0.05   &6.67  & 0.01 & 6.36      & 0.20 & 7.25 &0.00  & 5.50 &0.02 & 6.17 & 0.20 & 7.49 \\
			
			\hline
			
		\end{tabular}
	}
\end{table}

\emph{Simulation 3.} In the SBM setting, we compare the CBIC with the BIC proposed by  \cite{Saldana:Yu:Feng:2017}, the bootstrap corrected sequential test proposed by \cite{Lei:2016}, the ICL proposed by \cite{Daudin:Picard:Robin:2008} and the penalized likelihood method (PLH) proposed by \cite{Wang:Bickel:2017}. For the bootstrap corrected sequential test, we select threshold $t_n$ corresponding to the nominal type I error bound $10^{-4}$. The network size $n$ is the same as in Simulation 2 and the probability matrix is $\theta^*_{ab}=0.03(1+r\times\mathbf{1}(a=b))$. Note that in the SBM setting, the method of \cite{Lei:2016} is better than the network cross-validation of \cite{Chen:Lei:2018} (NCV) according to our simulations. Thus, in the SBM setting, the CBIC is not compared with the NCV of \cite{Chen:Lei:2018}.

The numerical results are shown in Tables  \ref{tab:1a}, \ref{tab:1b} and \ref{tab:1c}. From these tables, we can see that the CBIC shows a significant improvement over the BIC and the bootstrap corrected sequential test.
It can be seen from Table \ref{tab:1a} that, for $r=5$, the CBIC ($\lambda=1$) recovers the number of communities $k$ perfectly while the success rates for the BIC and the bootstrap corrected sequential test are low for $k\le 4$ and $k\ge 5$, respectively. It can also be seen from Table \ref{tab:1c} that, for $r=3$, the CBIC ($\lambda=1$) recovers the number of communities $k$ quite well for $k\le 4$. When the number of communities $k$ is large (e.g., $k\ge5$), for $r=3$, the BIC outperforms the CBIC. For this case, the performance of the CBIC can be improved by using a smaller $\lambda$.

Additionally, the CBIC with $\lambda=1/2$ consistently outperforms the ICL in all scenarios as shown in these tables. The CBIC with $\lambda=1/4$ performs even better for large $k$; for small $k$ ($k=2,3$), it performs slightly worse than the ICL. These results also suggest that when the number of communities is relatively small,
$\lambda=1$ is a good choice. On the other hand, when the number of communities is relatively large, $\lambda<1$, i.e., a lighter penalty is a better choice. If we have some prior knowledge about the number of communities, this observation  provides some guidance on the selection of $\lambda$.


We also compare the results from the PLH proposed by \cite{Wang:Bickel:2017} in Tables  \ref{tab:1a}, \ref{tab:1b} and \ref{tab:1c}. We found out that a typical tuning parameter selected by the procedure recommended in \cite{Wang:Bickel:2017} is usually small. Therefore, the PLH usually overestimates the number of communities $k$.

\begin{table}[h!]
\caption{Comparison of model selection methods for DCSBM: $r=5$  \label{tab:2a}}
\centering
\scalebox{1}{
\begin{tabular}{ccc|cc|cc|ccccccccccc}
\\
\hline
          & \multicolumn{ 2}{c|}{CBIC ($\lambda=1$)}&         \multicolumn{ 2}{c|}{BIC }&         \multicolumn{ 2}{c|}{NCV}&         \multicolumn{ 2}{c}{ICL}\\
\hline
&                       Prob &   Mean   &Prob& Mean &Prob& Mean &Prob& Mean\\
\hline
$k=2$                    &0.98  & 2.02  &0.11  & 3.37  & 0.94& 2.14&0.98  & 2.02 \\
$k=3$                    &0.99  & 3.01 &0.16  & 4.30  & 0.94& 3.06 &0.99  & 3.01\\
$k=4$                    &0.99  & 4.01 &0.37  & 4.91  & 0.89& 4.14 &0.99  & 4.01\\
$k=5$                    &0.97  & 5.05 &0.46  & 5.71  & 0.33& 5.09 &0.96  & 5.06\\
$k=6$                    &0.97  & 6.03 &0.38  & 6.57  & 0.29& 7.41 &0.97  & 6.03\\
$k=7$                    &0.82  & 7.11  &0.54  & 7.67& 0.25& 8.50 &0.83  & 7.10\\
$k=8$                    &0.72  & 8.09  &0.50  & 8.36 & 0.15& 9.38 &0.71  & 8.13\\

\hline

\end{tabular}
}
\end{table}

\begin{table}[h!]
\caption{Comparison of model selection methods for DCSBM: $r=4$  \label{tab:2b}}
\centering
\scalebox{1}{
\begin{tabular}{ccc|cc|cc|ccccccccccc}
\\
\hline
          & \multicolumn{ 2}{c|}{CBIC ($\lambda=1$)}&         \multicolumn{ 2}{c|}{BIC }&         \multicolumn{ 2}{c|}{NCV}&         \multicolumn{ 2}{c}{ICL}\\
\hline
&                       Prob &   Mean   &Prob& Mean &Prob& Mean &Prob& Mean\\
\hline
$k=2$                    &1  & 2 &0.16  & 3.47  & 0.80& 2.52&1  & 2 \\
$k=3$                    &0.98  & 3.02  &0.24  & 4.14 & 0.65& 3.60 &0.97  & 3.04\\
$k=4$                    &0.99  & 4.02 &0.51  & 4.70  & 0.14& 4.12&0.98  & 4.03 \\
$k=5$                    &0.93  & 5.04 &0.60  & 5.50  & 0.17& 5.42 &0.93  & 5.04\\
$k=6$                    &0.84  & 5.90 &0.63  & 6.46  & 0.16& 6.41 &0.84  & 5.90\\
$k=7$                    &0.17  & 6.43 &0.21  & 7.11  & 0.23& 8.00 &0.17  & 6.45\\
$k=8$                    &0.17  & 7.35 &0.21  & 8.20  & 0& 8.76 &0.17  & 7.32\\

\hline

\end{tabular}
}
\end{table}

\begin{table}[h!]
\caption{Comparison of model selection methods for DCSBM: $r=3$  \label{tab:2c}}
\centering
\scalebox{1}{
\begin{tabular}{ccc|cc|cc|ccccccccccc}
\\
\hline
          & \multicolumn{ 2}{c|}{CBIC ($\lambda=1$)}&         \multicolumn{ 2}{c|}{BIC }&         \multicolumn{ 2}{c|}{NCV}&         \multicolumn{ 2}{c}{ICL}\\
\hline
&                       Prob &   Mean   &Prob& Mean &Prob& Mean &Prob& Mean\\
\hline
$k=2$                    &0.95  & 2.05 &0.10  & 3.57  & 0.74& 2.44&0.92  & 2.08 \\
$k=3$                    &0.92  & 3.02 &0.19  & 4.09  & 0.16& 4.20 &0.92  & 3.02\\
$k=4$                    &0.18  & 3.38 &0.30  & 4.44  & 0.15& 3.56&0.18  & 3.36 \\
$k=5$                    &0.11  & 3.96 &0.27  & 5.24  & 0.13& 3.82 &0.10  & 3.94\\
$k=6$                    &0.07  & 4.54  &0.22  & 5.46 & 0.10& 5.56 &0.06  & 4.52\\
$k=7$                    &0.07  & 5.70 &0.18  & 6.61 & 0.07& 6.43 &0.07  & 5.70\\
$k=8$                    &0  & 6.05  &0.13  & 7.07 & 0& 9.09 &0  & 6.05\\

\hline

\end{tabular}
}
\end{table}

\emph{Simulation 5.} Now we investigate the performance of the CBIC in the DCSBM setting. Since the bootstrap corrected sequential test is only designed for the SBM, we compare the CBIC with the BIC and the NCV. In choosing the parameters $\theta,\omega$ in the DCSBM, we follow the approach proposed in \cite{Zhao:Levina:Zhu:2012}. That is, $\omega_1,\dots,\omega_n$ are independently generated from a distribution with expectation 1, specifically
$$\omega_i=\left\{
        \begin{array}{ll}
          \eta_i, & \hbox{w.p. 0.8;} \\
          7/11, & \hbox{w.p. 0.1;}\\
          15/11, & \hbox{w.p. 0.1},
        \end{array}
      \right.
$$
where $\eta_i$ is uniformly distributed on the interval $[\frac{3}{5},\frac{7}{5}]$.
The edge probability and network sizes are set the same as in Simulation 3.
The numerical results are given in Tables \ref{tab:2a}, \ref{tab:2b} and \ref{tab:2c}. The comparisons are similar to those in Tables \ref{tab:1a}, \ref{tab:1b} and \ref{tab:1c}.

\subsection{Real data analysis}
\subsubsection{International trade dataset}
We study an international trade dataset collected by \cite{Westveld:Hoff:2011}.
It contains yearly international trade data among $n=58$ countries from 1981--2000.
One can refer to \cite{Westveld:Hoff:2011} for a detailed description.
This dataset was revisited by \cite{Saldana:Yu:Feng:2017} for the purpose of estimating the number of communities.
Following their paper, we only focus on data from 1995 and transform the weighted adjacency matrix to
the binary matrix using their methods.
An adjacency matrix $A$ is created by first considering a weight matrix $W$ with $W_{ij}=\mbox{Trade}_{ij}+\mbox{Trade}_{ji}$,
where $\mbox{Trade}_{ij}$ denotes the value of exports from country $i$ to country $j$.
Define $A_{ij}=1$ if $W_{ij}\geq W_\alpha$, and $A_{ij}=0$ otherwise. Here $W_\alpha$ denotes the $\alpha$-th quantile of $\{W_{ij}\}_{1\leq i<j\leq n}$.
We set $\alpha=50$ as in \cite{Saldana:Yu:Feng:2017}.
At $\lambda=1$, the CBIC for the SBM estimates $\hat{k}=5$, while the BIC and the NCV estimate $\hat{k}=10$ and $\hat{k}=3$, respectively.  The CBIC for the DCSBM estimates $\hat{k}=3$, while both the BIC and the NCV estimate $\hat{k}=1$. As discussed in \cite{Saldana:Yu:Feng:2017}, it seems reasonable to select 3 communities, corresponding to countries with highest GDPs, industrialized European and Asian countries with medium-level GDPs, and developing countries in South America with the smallest GDPs.

\subsubsection{Political blog dataset}
We study the political blog network \citep{Adamic:Glance:2005}, collected around 2004. This network
consists of blogs about US politics, with edges representing web links.
The nodes are labeled as ``conservative'' and ``liberal'' by the authors of \cite{Adamic:Glance:2005}. So it is reasonable to assume that this network contains these two communities. We only consider its largest
connected component of this network which consists of $1222$ nodes with community sizes $586$ and $636$ as is commonly done in the literature.
It is widely believed that the DCSBM is a better fit for this network than the SBM.
At $\lambda=1$, the CBIC for the DCSBM estimates $\hat{k}=2$, while the PLH and the NCV estimate $\hat{k}=1$ and $\hat{k}=2$, respectively. We can see that both the CBIC and the NCV give a reasonable estimate for the number of communities.

\section{Discussion}
\label{section:discussion}
In this paper, under both the SBM and the DCSBM,
we have proposed a ``corrected Bayesian information criterion" that leads to a consistent estimator for the number of communities.
The criterion improves those used in \cite{Wang:Bickel:2017} and \cite{Saldana:Yu:Feng:2017} which tend
to underestimate and overestimate the number of communities, respectively.
The simulation results
indicate that the criterion has a good performance for estimating the number of communities for finite sample sizes.

Some extensions of the research in this article are possible. For instance, it is interesting to study whether the CBIC is still consistent for correlated binary data.
For this case,  we plan to study the composite likelihood studied in \cite{Saldana:Yu:Feng:2017}.

 Furthermore, we have noticed that $\lambda=1$ is not always the best choice.  When the number of communities $k$ is large (e.g., $k\ge5$), for both medium and small $r$ (e.g., $1<r\leq3$), $\lambda=1$ tends to underestimate the number of communities. As a result, $0\leq\lambda<1$ may be a better choice. For this case, we may use other methods to choose the tuning parameter $\lambda$, which will be explored for future work.

Finally, the theoretical studies in this article focus on the maximum likelihood  estimator of the SBM. It is well-known that achieving the exact maximum is an NP-hard problem \citep{Amini:etal:2013}. Many computationally efficient methods, such as the methods proposed by \cite{Amini:etal:2013} or \cite{Rohe:Chatterjee:Yu:2011} can achieve weakly consistency. Theoretically, whether the error introduced by the approximation affects the asymptotic consistency is an open problem. Although a general theory for these estimators may be difficult, for future work, we plan to  study the model selection consistency for specific algorithms.

\section{Appendix}
\label{section:appendix}

For simplicity, we first consider the case $\rho_n \equiv 1$. By using the techniques developed in \cite{Wang:Bickel:2017}, the case for $\rho_n\rightarrow 0$ at the rate $n \rho_n/\log n \rightarrow \infty$ can be shown in a similar way.

We quote some notations from \cite{Wang:Bickel:2017}. Define
\[
F(M,t)=\sum_{1\leq a\leq b\leq k'}t_{ab}\gamma(\frac{M_{ab}}{t_{ab}}),
\]

where $\gamma(x)=x\log x+(1-x)\log (1-x)$.

Define
\[
G(R(z),\theta^*)=\sum_{1\leq a\leq b\leq k'}[R\mathbf{1}\mathbf{1}^TR^T(z)]_{ab}\gamma(\frac{[R\theta^*R^T(z)]_{ab}}{[R\mathbf{1}\mathbf{1}^TR^T(z)]_{ab}}),
\]
where  $R(z)$ is the $k'\times k$ confusion matrix whose $(a,b)$-entry is
\[
R_{ab}(z,z^*)=\frac{1}{n}\sum_{i=1}^n\sum_{j\neq i}\mathbf{1}\{z_i=a,z_j^*=b\}.
\]
$G(R(z),\theta^*)$ can be viewed as a ``population version'' of the profile likelihood. That is, roughly speaking, $G(R(z),\theta^*)$ is the expected value of $F(m(z)/n^2,n(z)/n^2)$ under $\theta^*$.
The following  Lemmas \ref{lemma:underfit:a} and \ref{lemma:underfit:b} are essentially from \cite{Wang:Bickel:2017},
which bound the variations in $A$ and will be used in the proofs of Theorems \ref{theorem:underfit:a}-\ref{theorem:overfit:a}. For more work on this topic, we refer to \cite{Bickel:Chen:2009} and \cite{Bickel:etal:2013}.

Lemma \ref{lemma:underfit:a} shows in the case of underfitting an SBM with $(k-1)$ communities, $G(R(z),\theta^*)$ is maximized by combining two existing communities in the true model.

\begin{lemma}\label{lemma:underfit:a}
Given the true labels $z^*$, maximizing the function $G(R(z),\theta^*)$ over $R$ achieves its maximum in the label set
$$\{z\in[k-1]^n: there\,\, exists\,\, \mathscr{T}\,\, such \,\,that \,\,\mathscr{T}(z)=U_{a,b}(z^*), 1\leq a<b\leq k\},$$
where $U_{a,b}$ merges $z^*_i$ with labels $a$ and $b$.

Furthermore, suppose $z'$ gives the unique maximum (up to a permutation $\mathscr{T}$), for all $R$ such that $R\geq0$, $R^T\mathbf{1}=p \mathbf{1}$,
$$\frac{\partial G((1-\epsilon)R(z')+\epsilon R(z),\theta^*)}{\partial\epsilon}\mid_{\epsilon=0+}<-C_2<0.$$
\end{lemma}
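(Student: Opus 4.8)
The plan is to treat $G(\cdot,\theta^{*})$ as a function on the polytope $\mathcal{R}_{p}=\{R\geq0:R^{T}\mathbf{1}=p\}$ of relaxed (normalized) confusion matrices and to show its maximum is attained at a vertex. The constraint $R^{T}\mathbf{1}=p$ decouples across columns, so $\mathcal{R}_{p}$ is a product of simplices whose vertices are exactly the matrices with one nonzero entry per column, i.e.\ the hard maps $[k]\to[k-1]$; since $k'=k-1$, a surjective such map has exactly one fibre of size two, hence is a merge $U_{ab}(z^{*})$ with $1\leq a<b\leq k$, while a non-surjective one describes a model with fewer than $k-1$ blocks, which a merge weakly dominates under the non-degeneracy of $\theta^{*}$ implicit in the hypotheses. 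It is also clarifying to record, with $r_{a}=(R\mathbf{1})_{a}$, $\pi^{(a)}_{c}=R_{ac}/r_{a}$ and $\tilde\theta_{ab}=\sum_{c,d}\pi^{(a)}_{c}\pi^{(b)}_{d}\theta^{*}_{cd}$ (the argument of $\gamma$ in $G$), the identity $G(R,\theta^{*})=\sum_{c,d}p_{c}p_{d}\gamma(\theta^{*}_{cd})-\Delta(R)$, where $\Delta(R)=\sum_{a,b}r_{a}r_{b}\big(\sum_{c,d}\pi^{(a)}_{c}\pi^{(b)}_{d}\gamma(\theta^{*}_{cd})-\gamma(\tilde\theta_{ab})\big)\geq0$ by convexity of $\gamma$; so maximizing $G$ is the same as minimizing the total Jensen gap $\Delta$, which is the precise sense in which it is optimal ``not to split true blocks''.

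The core step is a coordinate-wise de-splitting argument. If $R\in\mathcal{R}_{p}$ has a column $c$ with two positive entries, in rows $a_{1}$ and $a_{2}$, consider the feasible segment $R(t)$, $t\in[-R_{a_{1}c},R_{a_{2}c}]$, obtained by transferring mass between these two entries and leaving everything else fixed; each endpoint makes one of them vanish. One shows that $t\mapsto G(R(t),\theta^{*})$ attains its maximum on this interval at an endpoint, so some endpoint is at least as good as $R$; iterating over columns (which no longer interact once fixed) and over pairs of positive entries within a column drives $R$, in finitely many steps, to a vertex $R^{\circ}$ with $G(R^{\circ},\theta^{*})\geq G(R,\theta^{*})$. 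With the first paragraph this gives that $\max_{\mathcal{R}_{p}}G(\cdot,\theta^{*})$, equivalently $\max_{z\in[k-1]^{n}}G(R(z),\theta^{*})$, is attained at some merge $U_{ab}(z^{*})$.

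For the ``Furthermore'' part, write $R'=R(z')$ for the maximizing vertex, unique up to relabeling. For any $R\in\mathcal{R}_{p}$ the map $\epsilon\mapsto G((1-\epsilon)R'+\epsilon R,\theta^{*})$ is maximized on $[0,1]$ at $\epsilon=0$, so its right derivative $\nabla G(R')\cdot(R-R')$ there is $\leq0$; I would upgrade this to the strict bound $<-C_{2}<0$, uniformly over $R$ bounded away from the relabelings of $R'$ (the quantity necessarily tends to $0$ as $R\to R'$, so such an exclusion is implicit in how the bound is used). Strict convexity of $\gamma$, together with uniqueness of the optimal merge and identifiability after merging, makes $G$ strictly concave along every feasible direction at $R'$; since $R'$ is a vertex the feasible directions span a finitely generated cone, so strict negativity along each of its finitely many extreme rays, plus a compactness argument, produces a single constant $C_{2}$.

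The step I expect to be the main obstacle is the endpoint-maximality of $t\mapsto G(R(t),\theta^{*})$ in the de-splitting argument: because the argument of $\gamma$ in $G(R(t),\theta^{*})$ is a ratio of quadratics in $t$, one cannot directly invoke convexity of that map. The intended resolution is to rewrite the terms of $G$ involving rows $a_{1}$ or $a_{2}$ so that the $t$-dependence enters only through honest perspective functions $(M,t)\mapsto t\,\gamma(M/t)$ of $\gamma$ --- reading the transfer as redistributing the contribution of true block $c$ among the affected fitted pairs --- and then use joint convexity of the perspective map; absent a clean such rewriting, one differentiates twice and checks signs via $\gamma''>0$, or else compares a candidate maximizer with the vertices directly. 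The remaining delicate point, making $C_{2}$ in the ``Furthermore'' part uniform and ultimately $n$-free, is exactly what the compactness/finiteness reduction of the third paragraph handles.
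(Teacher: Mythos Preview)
The paper does not prove this lemma at all: it is simply quoted from \cite{Wang:Bickel:2016} (see the sentence preceding the lemma). So there is no in-paper argument to compare against, and your proposal has to be judged on its own.

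Your overall architecture is the right one, and the Jensen-gap identity
\[
G(R,\theta^{*})=\sum_{c,d}p_{c}p_{d}\gamma(\theta^{*}_{cd})-\Delta(R),\qquad
\Delta(R)=\sum_{a,b}r_{a}r_{b}\Bigl(\sum_{c,d}\pi^{(a)}_{c}\pi^{(b)}_{d}\gamma(\theta^{*}_{cd})-\gamma(\tilde\theta_{ab})\Bigr)\ge 0,
\]
is correct and clarifying. The description of the vertices of $\mathcal R_p$ as hard maps $[k]\to[k-1]$, hence merges, is also fine.

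There is, however, a genuine gap in the de-splitting step. Your perspective-function idea does handle the ``mixed'' summands $(a_i,b)$ with $b\notin\{a_1,a_2\}$: along the transfer $R(t)$ both $M_{a_ib}(t)=[R(t)\theta^{*}R(t)^{T}]_{a_ib}$ and $T_{a_ib}(t)=r_{a_i}(t)r_b$ are \emph{affine} in $t$, so $t\mapsto T_{a_ib}(t)\gamma\bigl(M_{a_ib}(t)/T_{a_ib}(t)\bigr)$ is convex by joint convexity of the perspective map. But for the three ``internal'' summands $(a_1,a_1)$, $(a_1,a_2)$, $(a_2,a_2)$ both $M$ and $T$ are genuinely quadratic in $t$ (e.g.\ $T_{a_1a_2}(t)=(r_{a_1}-t)(r_{a_2}+t)$ and $M_{a_1a_2}(t)$ contains the cross term $R_{a_1c}(t)R_{a_2c}(t)\theta^{*}_{cc}$), and the perspective argument no longer applies. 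Your fallback ``differentiate twice and check signs via $\gamma''>0$'' is not automatic here either, because the composition with a quadratic inner map can flip convexity. This block of terms is exactly where the work lies; the proposal currently waves at it rather than resolving it.

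A second, smaller point: in your ``Furthermore'' paragraph you write that strict convexity of $\gamma$ makes $G$ ``strictly concave along every feasible direction at $R'$''. You need \emph{convexity} of $\epsilon\mapsto G((1-\epsilon)R'+\epsilon R,\theta^{*})$ here, not concavity: convexity gives
\[
\partial_\epsilon G\big|_{\epsilon=0^+}\ \le\ G(R,\theta^{*})-G(R',\theta^{*})\ <\ 0,
\]
which is precisely the strict negativity you want (and then compactness over directions bounded away from relabelings of $R'$ yields the uniform constant $C_2$). Concavity would give the reverse inequality and no control. Note that this is the \emph{same} convexity-along-segments property that would make the first part go through; if you can establish it once, both halves follow. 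Your remark that the bound cannot hold literally ``for all $R$'' (it fails as $R\to R'$) is well taken and matches how the lemma is actually used downstream.
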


For simplicity, $R(z)\theta^*R(z)^T$ is abbreviated to $R\theta^*R(z)^T$.
\begin{lemma}\label{lemma:underfit:b}
Suppose $z\in [k']^n$ and define $X(z)=\frac{m(z)}{n^2}-R\theta^*R(z)^T$. For $\epsilon\leq3$,
$$P(\sum_{1\leq a\leq b\leq k'}|X_{ab}(z)|\geq\epsilon)\leq2(k')^{n+2}\exp(-C_1(\theta^*)n^2\epsilon^2).$$
Let $y\in[k']^n$ be a fixed set of labels, then for $\epsilon\leq \frac{3m}{n}$\footnote{This $m$ is an integer and is not to be confused with the function $m(z)$. },
$$\begin{array}{lll}P(\max_{z:\mid x-y\mid\leq m}\parallel X(z)-X(y)\parallel_{\infty}>\epsilon)\\
\leq2{{n}\choose{m}}(k')^{m+2}\exp(-C_2(\theta^*)\frac{n^3\epsilon^2}{m}),
\end{array}$$
where $C_1(\theta^*)$ and $C_2(\theta^*)$ are constants depending only on $\theta^*$.
\end{lemma}

\subsection{Proofs for Theorem \ref{theorem:underfit:a}}

In order to prove Theorem \ref{theorem:underfit:a}, we need one lemma below.

\begin{lemma}
\label{lemma:underfit:c}
Suppose that $A\sim P_{\theta^*, z^*}$. If $k=o(n/\log n)$, with probability tending to 1, we have
$$\max_{z\in [k-1]^n}\sup_{\theta\in\Theta_{k-1}}\log f(A|\theta,z)=\sup_{\theta\in\Theta_{k-1}}\log f(A|\theta,z')$$
\end{lemma}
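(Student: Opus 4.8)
Write $\Phi(z)=\sup_{\theta\in\Theta_{k-1}}\log f(A|\theta,z)=\frac{n^2}{2}F\!\big(m(z)/n^2,n(z)/n^2\big)$ and let $\hat z\in[k-1]^n$ be any maximizer of $\Phi$. Since $\Phi$ is invariant under relabeling the $k-1$ blocks, it suffices to show that, with probability tending to one, every maximizer $\hat z$ agrees with $z'=U_{k-1,k}(z^*)$ up to a permutation of labels; the stated identity then follows because $z'$ lies in that orbit. The starting remark is that $z'$ is an \emph{extreme point} of the confusion polytope: because $z'$ assigns $z'_i=u(z^*_i)$ deterministically, its confusion matrix is $R_{ac}(z')=p_c\mathbf{1}\{u(c)=a\}$, so each column of $R(z')$ is $p_c$ times a coordinate vector. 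Hence if $z$ disagrees with every permutation of $z'$ on exactly $m$ nodes, then $\|R(z)-\tau(R(z'))\|_1\gtrsim m/n$ for every permutation $\tau$, which is the quantitative ``distance to the optimum'' I will trade against the fluctuations of $\Phi$. I would also record that the population counterpart of $\Phi(z)$ is $\tfrac{n^2}{2}G(R(z),\theta^*)$, since $n(z)$ is deterministic given $z$ and $\E[m(z)/n^2]=R\theta^*R^T(z)$.

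The plan is a two-regime peeling argument. Using the identity for $\Phi$ together with condition (A1) and the fact that for finite $k$ all the weighted averages entering $G$, as well as the empirical ratios $m_{ab}(z)/n_{ab}(z)$, lie in a fixed compact subinterval of $(0,1)$ with probability $\to1$, the functions $F$ and $\gamma$ are smooth on the relevant region, and $\tfrac{2}{n^2}\Phi(z)-\tfrac1{n^2}G(R(z),\theta^*)$ is, to leading order, linear in $X(z)=m(z)/n^2-R\theta^*R^T(z)$, with a controlled quadratic remainder. In the \emph{far} regime $\{z:\|R(z)-\tau(R(z'))\|\ge\delta\text{ for all }\tau\}$ with $\delta>0$ fixed small, Lemma~\ref{lemma:underfit:a} (uniqueness of the maximizer of $G$ over all of $[k-1]^n$, plus the strict-optimality bound at the vertex $R(z')$) gives $\tfrac1{n^2}G(R(z),\theta^*)\le\tfrac1{n^2}G(R(z'),\theta^*)-\eta(\delta)$ for some $\eta(\delta)>0$, while the uniform deviation bound of Lemma~\ref{lemma:underfit:b} (first display) forces $\sup_{z\in[k-1]^n}\big|\tfrac2{n^2}\Phi(z)-\tfrac1{n^2}G(R(z),\theta^*)\big|\to0$ in probability; so no far $z$ can beat $z'$. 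In the \emph{near} regime I partition the labelings by $m=$ Hamming distance to the orbit of $z'$, $1\le m\le\delta n$. The shell of radius $m$ contains at most $\binom{n}{m}(k-1)^m\le e^{Cm\log n}$ labelings; the deterministic part of $\Phi(z)-\Phi(z')$ is $\le -c\,mn$ (combine $\|R(z)-\tau(R(z'))\|_1\gtrsim m/n$ with the local strict decrease of $G$ away from the vertex $R(z')$ furnished by Lemma~\ref{lemma:underfit:a}), whereas the stochastic part is, by the local concentration bound of Lemma~\ref{lemma:underfit:b} (second display) plus a union bound over the shell, of order $\sqrt{m\log n}$ (times a constant depending on $\sup|\gamma'|$ on the relevant interval), uniformly over the shell. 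Since $\sqrt{m\log n}=o(mn)$ for every $m\ge1$, a union bound over $m=1,\dots,\lfloor\delta n\rfloor$ shows that with probability $\to1$ no $z$ at Hamming distance $\ge1$ from the orbit of $z'$ satisfies $\Phi(z)\ge\Phi(z')$. Combining the two regimes, every maximizer of $\Phi$ lies in the orbit of $z'$ with probability $\to1$, which is the assertion.

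\textbf{Main obstacle.} The delicate part is entirely the near regime: one needs (i) a genuinely \emph{linear-in-$m$} lower bound $G(R(z'),\theta^*)-G(R(z),\theta^*)\gtrsim m/n$, which hinges on $R(z')$ being a vertex and on the uniform strictness constant $C_2$ of Lemma~\ref{lemma:underfit:a} applied along the segments from $R(z')$ toward nearby confusion matrices; and (ii) matching the entropy $\binom{n}{m}$ of each Hamming shell against the sub-Gaussian size of the increments $X(z)-X(z')$ provided by Lemma~\ref{lemma:underfit:b}, carried out uniformly in $m$ — in effect a one-step chaining/peeling estimate whose tightest case is $m=O(1)$, where the $O(n)$ deterministic gap must still dominate an $O(\sqrt{n\log n})$ fluctuation. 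Controlling the linearization remainder of $F$ also requires keeping every probability that appears bounded away from $0$ and $1$, which is exactly where condition (A1) and the finiteness of $k$ are used.
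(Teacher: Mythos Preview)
Your proposal is correct and follows essentially the same two-regime peeling argument as the paper, invoking Lemmas~\ref{lemma:underfit:a} and~\ref{lemma:underfit:b} in exactly the same roles; the paper phrases the far/near split via the sublevel set $I_{\delta_n}^-=\{z:G(R(z),\theta^*)-G(R(z'),\theta^*)<-\delta_n\}$ with $\delta_n\to0$ and works directly with the empirical $F$ (transferring the strict directional-derivative bound from $G$ to $F$ by concentration) rather than separating a deterministic and a stochastic part, but the logic is identical. One minor imprecision: your quoted stochastic order $\sqrt{m\log n}$ undercounts the uniform fluctuation over a Hamming shell of radius $m$ (the correct scale after the union bound is closer to $m\sqrt{n\log n}$), but since this is still $o(mn)$ the comparison with the $-cmn$ deterministic gap, and hence the conclusion, are unaffected.
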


\begin{proof}
The arguments are similar to those for Lemma 2.3 in \cite{Wang:Bickel:2017}.
By Lemma \ref{lemma:underfit:a}, without loss of generality assume the maximum of $G(R(z),\theta^*)$ is achieved at $z'=U_{k-1,k}(z^*)$. Denote $\theta'=U_{k-1,k}(\theta^*,p)$. Similar to \cite{Bickel:etal:2013}, we prove this by considering $z$ far from $z'$ and close to $z'$ (up to permutation $\mathscr{T}$). Define
$$I_{\delta_n}^-=\{z\in[k-1]^n:G(R(z),\theta^*)-G(R(z'),\theta^*)<-\delta_n\},$$
for $\delta_n\rightarrow 0$ slowly enough.

By Lemma \ref{lemma:underfit:b},
\[
\begin{array}{lll} &  & \mid F(m(z)/n^2,n(z)/n^2)-G(R(z),\theta^*)\mid\\
& \leq & C\sum_{1\leq a\leq b\leq k-1}\mid m_{ab}(z)/n^2-(R\theta^*R^T(z))_{ab}\mid\\
& =  & O_p((\log n/n)^{1/2})
\end{array}
\]
since $\gamma(\cdot)$ is Lipschitz on any interval bounded away from $0$ and $1$.

For $z\in I_{\delta_n}^-$, we have
\[
\begin{array}{lll} &  & \max_{z\in I_{\delta_n}^-}\sup_{\theta\in\Theta_{k-1}}\log f(A|\theta,z)\\
& \leq & \log(\sum_{z\in I_{\delta_n}^-}\sup_{\theta\in\Theta_{k-1}}f(A|\theta,z))\\
& =  & \log(\sum_{z\in I_{\delta_n}^-}\sup_{\theta\in\Theta_{k-1}}e^{\log f(A|\theta,z)})\\
& \leq & \log (\sup_{\theta\in\Theta_{k-1}}f(A|\theta,z')(k-1)^ne^{O_p(n^2(\log n/n)^{1/2})-n^2\delta_n})\\
& =  & \log (\sup_{\theta\in\Theta_{k-1}}f(A|\theta,z'))+\log ((k-1)^ne^{O_p(n^2(\log n/n)^{1/2})-n^2\delta_n})\\
& < & \sup_{\theta\in\Theta_{k-1}}\log f(A|\theta,z'),
\end{array}
\]
choosing $\delta_n \rightarrow 0$ slowly enough such that $\delta_n/(\log n/n)^{1/2}\rightarrow \infty$.

For $z\notin I_{\delta_n}^-$, $|G(R(z),\theta^*)-G(R(z'),\theta^*)|\rightarrow0$. Let $\bar{z}=\min_\mathscr{T}\mid\mathscr{T}(z)-z'\mid$. Since the maximum is unique up to $\mathscr{T}$, $\parallel R(\bar{z})-R(z')\parallel_{\infty}\rightarrow0$.

By Lemma \ref{lemma:underfit:b},
\[
\begin{array}{lll}
&&P(\max_{z\notin \mathscr{T}(z')}\parallel X(\bar{z})-X(z')\parallel_{\infty}>\epsilon\mid \bar{z}-z'\mid/n)\\
& \leq & \sum_{m=1}^nP(\max_{z:z=\bar{z},\mid \bar{z}-z'\mid=m}\parallel X(\bar{z})-X(z')\parallel_{\infty}>\epsilon\frac{m}{n})\\
& \leq & \sum_{m=1}^n2(k-1)^{k-1}n^m(k-1)^{m+2}e^{-Cnm} \rightarrow 0.
\end{array}
\]
It follows for $\mid \bar{z}-z'\mid=m$, $z\notin I_{\delta_n}^-$,
\[
\begin{array}{lll}
\parallel\frac{m(\bar{z})}{n^2}-\frac{m(z')}{n^2}\parallel_{\infty}
& = & o_p(1)\frac{\mid \bar{z}-z'\mid}{n}+\parallel R\theta^*R^T(\bar{z})-R\theta^*R^T(z')\parallel_{\infty}\\
& \geq & \frac{m}{n}(C+o_p(1)).
\end{array}
\]

Observe that $\parallel\frac{m(z')}{n^2}-R\theta^*R^T(z')\parallel_{\infty}=o_p(1)$. By Lemma \ref{lemma:underfit:b}, $\parallel\frac{n(z')}{n^2}-R\mathbf{1}\mathbf{1}^TR^T(z')\parallel_{\infty}=o_p(1)$.
Note that $F(\cdot,\cdot)$ has continuous derivative in the neighborhood of $(\frac{m(z')}{n^2},\frac{n(z')}{n^2})$. By Lemma \ref{lemma:underfit:a},
$$\frac{\partial F((1-\epsilon)\frac{m(z')}{n^2}+\epsilon M,(1-\epsilon)\frac{n(z')}{n^2}+\epsilon t)}{\partial \epsilon}\mid_{\epsilon=0^+}<-C<0$$
for $(M,t)$ in the neighborhood of $(\frac{m(z')}{n^2},\frac{n(z')}{n^2})$. Hence,
\[
F(\frac{m(\bar{z})}{n^2},\frac{n(\bar{z})}{n^2})-F(\frac{m(z')}{n^2},\frac{n(z')}{n^2})\leq -C\frac{m}{n}.
\]
Since
\[
\begin{array}{lll}
&&\sup_{\theta\in\Theta_{k-1}}\log f(A|\theta,z)-\sup_{\theta\in\Theta_{k-1}}\log f(A|\theta,z')\\
&\leq & n^2(F(\frac{m(\bar{z})}{n^2},\frac{n(\bar{z})}{n^2})-F(\frac{m(z')}{n^2},\frac{n(z')}{n^2}))\\
& = & -Cmn,
\end{array}
\]
we have
\[
\begin{array}{lll}
&&\max_{z\notin I_{\delta_n}^-, z\notin \mathscr{T}(z')}\sup_{\theta\in\Theta_{k-1}}\log f(A|\theta,z)\\
& \leq & \log(\sum_{z\notin I_{\delta_n}^-, z\notin \mathscr{T}(z')}\sup_{\theta\in\Theta_{k-1}}f(A|\theta,z))\\
& \leq & \log(\sum_{z\in \mathscr{T}(z')}\sup_{\theta\in\Theta_{k-1}}f(A|\theta,z)\sum_{m=1}^n(k-1)^mn^me^{-Cmn})\\
& \leq & \log(\sup_{\theta\in\Theta_{k-1}}f(A|\theta,z')\sum_{z\in \mathscr{T}(z')}\sum_{m=1}^n(k-1)^mn^me^{-Cmn})\\
& = & \sup_{\theta\in\Theta_{k-1}}\log f(A|\theta,z')+\log((k-1)^{k-1}\sum_{m=1}^n(k-1)^mn^me^{-Cmn})\\
& < & \sup_{\theta\in\Theta_{k-1}}\log f(A|\theta,z')+\log((k-1)^{k}n^2e^{-Cn})\\
& = & \sup_{\theta\in\Theta_{k-1}}\log f(A|\theta,z')+k\log(k-1)+2\log n-Cn\\
& < & \sup_{\theta\in\Theta_{k-1}}\log f(A|\theta,z').
\end{array}
\]
\end{proof}

\begin{proof}[Proof of Theorem \ref{theorem:underfit:a}]
By \citeauthor{Hoeffding:1963}'s (\citeyear{Hoeffding:1963}) inequality, we have
\[
\begin{array}{lll}
P(\max_{1\leq a\leq b\leq k}\mid \theta_{ab}^*-\hat{\theta}_{ab}\mid>t) & \leq & \sum_{1\leq a\leq b\leq k}P(\mid \theta_{ab}^*-\hat{\theta}_{ab}\mid>t)\\
& \leq & \sum_{1\leq a\leq b\leq k}e^{-2t^2n_an_b}\\
& \leq & e^{2\log k-2C_1^2n^2t^2/k^2}.
\end{array}
\]
It implies that
\[
\max_{1\leq a\leq b\leq k}\mid \theta_{ab}^*-\hat{\theta}_{ab}\mid=O_p(\frac{k\sqrt{\log k}}{n}).
\]

Note that $\sup_{\theta\in\Theta_{k-1}}\log f(A|\theta,z')$ is uniquely maximized at
$$\hat{\theta}_{ab}=\frac{m_{ab}(z')}{n_{ab}(z')}=\frac{m_{ab}}{n_{ab}}=\theta_{ab}^*+O_p(\frac{k\sqrt{\log k}}{n})\,\,\, \mathrm{for}\,\,\, 1\leq a\leq b\leq k-2,$$
$$\hat{\theta}_{a(k-1)}'=\frac{m_{a(k-1)}(z')}{n_{a(k-1)}(z')}=\frac{m_{a(k-1)}+m_{ak}}{n_{a(k-1)}+n_{ak}}=\theta_{a(k-1)}'+O_p(\frac{k\sqrt{\log k}}{n})\,\,\, \mathrm{for}\,\,\, 1\leq a\leq k-2,$$
$$\hat{\theta}_{(k-1)(k-1)}'=\frac{m_{(k-1)(k-1)}(z')}{n_{(k-1)(k-1)}(z')}=\frac{m_{(k-1)(k-1)}+m_{(k-1)k}+m_{kk}}{n_{(k-1)(k-1)}+n_{(k-1)k}+n_{kk}}=\theta_{(k-1)(k-1)}'+O_p(\frac{k\sqrt{\log k}}{n}).$$
Thus, we have
\[
\begin{array}{lll}
&&n^{-1}(\max_{z\in [k-1]^n}\sup_{\theta\in\Theta_{k-1}}\log f(A|\theta,z)-\log f(A|\theta^*,z^*))\\
&=& n^{-1}(\sup_{\theta\in\Theta_{k-1}}\log f(A|\theta,z')-\log f(A|\theta^*,z^*))\\
&=& n^{-1}(\sum_{1\leq a\leq b\leq k-2}(m_{ab}\log\frac{\hat{\theta}_{ab}}{1-\hat{\theta}_{ab}}+n_{ab}\log(1-\hat{\theta}_{ab}))\\
&&-\sum_{1\leq a\leq b\leq k-2}(m_{ab}\log\frac{\theta_{ab}^*}{1-\theta_{ab}^*}
+n_{ab}\log(1-\theta_{ab}^*))\\
&&+\sum_{k-1\leq a\leq b\leq k}(m_{ab}(z')\log\frac{\hat{\theta}_{ab}'}{1-\hat{\theta}_{ab}'}+n_{ab}(z')\log(1-\hat{\theta}_{ab}'))\\
&&-\sum_{k-1\leq a\leq b\leq k}(m_{ab}\log\frac{\theta_{ab}^*}{1-\theta_{ab}^*}
+n_{ab}\log(1-\theta_{ab}^*))).
\end{array}
\]
Let
\[
K=\sum_{1\leq a\leq b\leq k-2}(m_{ab}\log\frac{\hat{\theta}_{ab}}{1-\hat{\theta}_{ab}}+n_{ab}\log(1-\hat{\theta}_{ab}))
-\sum_{1\leq a\leq b\leq k-2}(m_{ab}\log\frac{\theta_{ab}^*}{1-\theta_{ab}^*}+n_{ab}\log(1-\theta_{ab}^*)),
\]
\[
K_1=\frac{1}{2}\sum_{1\leq a\leq b\leq k-2}\frac{n_{ab}(\hat{\theta}_{ab}-\theta_{ab}^*)^2}{\theta_{ab}^*(1-\theta_{ab}^*)}.
\]
By the proof of Theorem \ref{theorem:fit:a}, we have
\[
K=K_1+O_p(\frac{k^3\log^{3/2}k}{n})=O_p(k^2\log k)+O_p(\frac{k^3\log^{3/2}k}{n}).
\]

Thus, we have
\[
\begin{array}{lll}
&& (n^{-1}(\max_{z\in [k-1]^n}\sup_{\theta\in\Theta_{k-1}}\log f(A|\theta,z)-\log f(A|\theta^*,z^*))-n\mu)/\sigma(\theta^*)\\
&=& (n^{-1}(\sum_{k-1\leq a\leq b\leq k}(m_{ab}(z')\log\frac{\theta_{ab}'}{1-\theta_{ab}'}+n_{ab}(z')\log(1-\theta_{ab}'))\\
&&-\sum_{k-1\leq a\leq b\leq k}(m_{ab}\log\frac{\theta_{ab}^*}{1-\theta_{ab}^*}
+n_{ab}\log(1-\theta_{ab}^*)))-n\mu+O_p(\frac{K}{n}))/\sigma(\theta^*)\\
&=& (n^{-1}(\sum_{k-1\leq a\leq b\leq k}(m_{ab}(z')\log\frac{\theta_{ab}'}{1-\theta_{ab}'}+n_{ab}(z')\log(1-\theta_{ab}'))\\
&&-\sum_{k-1\leq a\leq b\leq k}(m_{ab}\log\frac{\theta_{ab}^*}{1-\theta_{ab}^*}
+n_{ab}\log(1-\theta_{ab}^*)))-n\mu+O_p(\frac{k^2\log k}{n}))/\sigma(\theta^*)\\
&\stackrel{d}{\rightarrow}& N(0,1).
\end{array}
\]
\end{proof}

\subsection{Proof of Corollary \ref{corollary:underfit:a}}
By \citeauthor{Hoeffding:1963}'s (\citeyear{Hoeffding:1963}) inequality, we have
\[
\begin{array}{lll}
P(\max_{1\leq a\leq b\leq k}\mid \theta_{ab}^*-\hat{\theta}_{ab}\mid>t)&= &P(\max_{1\leq a\leq b\leq k}\mid \rho_n\tilde{\theta}_{ab}^*-\rho_n(\rho_n^{-1}\hat{\theta}_{ab})\mid>t)\\
& \leq & \sum_{1\leq a\leq b\leq k}P(\mid \tilde{\theta}_{ab}^*-\rho_n^{-1}\hat{\theta}_{ab}\mid>\rho_n^{-1}t)\\
& \leq & \sum_{1\leq a\leq b\leq k}e^{-2\rho_n^{-2}t^2n_an_b}\\
& \leq & e^{2\log k-2C_1^2n^2\rho_n^{-2}t^2}.
\end{array}
\]
It implies that
\[
\max_{1\leq a\leq b\leq k}\mid \theta_{ab}^*-\hat{\theta}_{ab}\mid=O_p(\frac{\rho_nk\sqrt{\log k}}{n}).
\]

By the proof of Theorem \ref{theorem:fit:a}, we have
\[
K=K_1+O_p(\frac{\rho_n^3k^3\log^{3/2}k}{n})=O_p(\rho_nk^2\log k)+O_p(\frac{\rho_n^3k^3\log^{3/2}k}{n}),
\]

By using the techniques developed in \cite{Wang:Bickel:2017}, the proof is similar to that of Theorem \ref{theorem:underfit:a}.

\subsection{Proofs for Theorem \ref{theorem:fit:a}}

We first need one useful lemma below.
\begin{lemma}
\label{lemma:fit:a}
Suppose that $A\sim P_{\theta^*, z^*}$. If $k=o(n/\log n)$, with probability tending to 1, we have
\[
\max_{z\in [k]^n}\sup_{\theta\in\Theta_{k}}\log f(A|\theta,z)=\sup_{\theta\in\Theta_{k}}\log f(A|\theta,z^*).
\]
\end{lemma}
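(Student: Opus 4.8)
The plan is to follow exactly the route used for Lemma~\ref{lemma:underfit:c} (equivalently, Lemma~2.3 of \cite{Wang:Bickel:2016}), with the merged-label set replaced by the permutation orbit $\tau(z^*)$ of the true assignment. First I would pass to the population objective: using the identity $\sup_{\theta\in\Theta_k}\log f(A|\theta,z)=\tfrac{n^2}{2}F(m(z)/n^2,n(z)/n^2)$ together with the uniform approximation $F(m(z)/n^2,n(z)/n^2)=G(R(z),\theta^*)+o_p(\epsilon_n)$ over $z\in[k]^n$ — which follows from the concentration bound of Lemma~\ref{lemma:underfit:b} and the fact that $\gamma$ is Lipschitz on compact subintervals of $(0,1)$, the relevant arguments staying bounded away from $0$ and $1$ by condition (A1) and the fixed nondegenerate $\theta^*$ — the problem reduces to showing that over $z\in[k]^n$ the map $z\mapsto G(R(z),\theta^*)$ is maximized, uniquely up to a label permutation, at $z=z^*$, and that this maximum is strict in a quantitative sense.

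Second, I would establish that population statement. When $z=z^*$ the confusion matrix $R(z^*)$ is a scaled permutation matrix and $G(R(z^*),\theta^*)=\sum_{a,b}p_ap_b\,\gamma(\theta^*_{ab})$; for any other $z$ the matrix $R(z)$ either has a zero row — so $z$ effectively uses fewer than $k$ labels and the underfitting analysis of Lemma~\ref{lemma:underfit:a} gives an order-one strict decrease (here one uses that $\theta^*$ is a genuine, identifiable $k$-block parameter, so merging blocks strictly changes the objective, i.e. the analogue of $\mu$ is bounded away from $0$) — or $R(z)$ genuinely mixes distinct blocks, in which case strict concavity of $\gamma$, Jensen's inequality, and identifiability of $\theta^*$ force $G(R(z),\theta^*)<G(R(z^*),\theta^*)$. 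Moreover, exactly as in the second half of Lemma~\ref{lemma:underfit:a}, one obtains a uniform directional-derivative bound $\partial_\epsilon G\bigl((1-\epsilon)R(z^*)+\epsilon R(z),\theta^*\bigr)\big|_{\epsilon=0^+}<-C<0$ for all feasible $R$, which supplies the linear-in-distance decay near $z^*$ that is needed below.

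Third, I would run the two-region union bound. Fix $\delta_n\to0$ slowly and set $I^-_{\delta_n}=\{z\in[k]^n:G(R(z),\theta^*)-G(R(z^*),\theta^*)<-\delta_n\}$. On $I^-_{\delta_n}$, bounding the maximum by the logarithm of the sum and inserting the uniform approximation gives $\max_{z\in I^-_{\delta_n}}\sup_\theta\log f(A|\theta,z)\le\sup_\theta\log f(A|\theta,z^*)+\log\bigl(k^n e^{o_p(n^2\epsilon_n)-n^2\delta_n}\bigr)<\sup_\theta\log f(A|\theta,z^*)$ with probability tending to one, since $n^2\delta_n\gg n\log k$. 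For $z\notin I^-_{\delta_n}$ that are not permutations of $z^*$, write $\bar z$ for the best-aligned relabeling and $m=|\bar z-z^*|$; the local part of Lemma~\ref{lemma:underfit:b} together with the derivative bound (and continuity of $F$ near $(m(z^*)/n^2,n(z^*)/n^2)$) yields $\sup_\theta\log f(A|\theta,z)-\sup_\theta\log f(A|\theta,z^*)\le -Cmn$, and then $\sum_{m=1}^n k^m n^m e^{-Cmn}\to0$ closes the estimate. Combining the two regions, with probability tending to one every maximizer of $\max_{z\in[k]^n}\sup_\theta\log f(A|\theta,z)$ lies in $\tau(z^*)$, on which the profile log-likelihood is constant, which is the claim.

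The hard part will be the population-level uniqueness step: one must show that $z^*$ is the \emph{unique} (up to permutation) maximizer of $G(\cdot,\theta^*)$ with a uniform strict gap. This requires that $\theta^*$ has no repeated blocks (identifiability of the $k$-block parameter) and a careful treatment of the degenerate confusion matrices that use fewer than $k$ labels — for those the gap cannot come from strict concavity directly but must be imported from the underfitting regime. Everything else is the concentration-plus-union-bound machinery already assembled for Lemma~\ref{lemma:underfit:c}.
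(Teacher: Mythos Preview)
Your proposal is correct and follows essentially the same route as the paper's proof: the paper also reduces to the population objective $G(R(z),\theta^*)$, splits $[k]^n$ into the far set $I_{\delta_n}$ and the near set, and handles each piece via the uniform approximation from Lemma~\ref{lemma:underfit:b} and the directional-derivative bound from Lemma~\ref{lemma:underfit:a}, exactly as you outline. The only difference is that the paper simply asserts ``$G(R(z),\theta^*)$ is maximized at $z^*$'' (deferring to Lemma~3 of \cite{Bickel:etal:2013}), whereas you sketch an argument for it; your identification of the population-uniqueness step as the crux --- and in particular the need to handle confusion matrices with empty rows via the underfitting analysis --- is accurate and more explicit than what the paper provides.
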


This lemma is essentially Lemma 3 in \cite{Bickel:etal:2013}. The arguments are similar and thus omitted.

\begin{proof}[Proof of Theorem \ref{theorem:fit:a}]

By Taylor's expansion, we have
\[
\begin{array}{lll}
&& 2(\max_{z\in [k]^n}\sup_{\theta\in\Theta_k}\log f(A|\theta,z)-\log f(A|\theta^*,z^*))\\
&=& 2(\sup_{\theta\in\Theta_k}\log f(A|\theta,z^*)-\log f(A|\theta^*,z^*))\\
& = & 2\sum_{1\leq a\leq b\leq k}(m_{ab}\log\frac{\hat{\theta}_{ab}}{\theta_{ab}^*}+(n_{ab}-m_{ab})\log\frac{1-\hat{\theta}_{ab}}{1-\theta_{ab}^*})\\
& = & 2\sum_{1\leq a\leq b\leq k}(n_{ab}\hat{\theta}_{ab}\log\frac{\theta_{ab}^*+\hat{\theta}_{ab}-\theta_{ab}^*}{\theta_{ab}^*}
  +n_{ab}(1-\hat{\theta}_{ab})\log\frac{1-\theta_{ab}^*+\theta_{ab}^*-\hat{\theta}_{ab}}{1-\theta_{ab}^*})\\
& = & 2\sum_{1\leq a\leq b\leq k}(n_{ab}(\theta_{ab}^*+\Delta_{ab})(\frac{\Delta_{ab}}{\theta_{ab}^*}-\frac{\Delta_{ab}^2}{2\theta_{ab}^{*2}})\\
&&+n_{ab}(1-\theta_{ab}^*-\Delta_{ab})(\frac{-\Delta_{ab}}{1-\theta_{ab}^*}-\frac{\Delta_{ab}^2}{2(1-\theta_{ab}^*)^2})+O(n_{ab}\Delta_{ab}^3)) \\
  & = & 2\sum_{1\leq a\leq b\leq k}(n_{ab}(\Delta_{ab}+\frac{\Delta_{ab}^2}{2\theta_{ab}^*})
  +n_{ab}(-\Delta_{ab}+\frac{\Delta_{ab}^2}{2(1-\theta_{ab}^*)})+O(n_{ab}\Delta_{ab}^3)),
\end{array}
\]
where $\Delta_{ab}=\hat{\theta}_{ab}-\theta_{ab}^*$. By the proof of Theorem \ref{theorem:underfit:a}, we have
\[
\begin{array}{lll}
2L_{k,k}
& = & 2\sum_{1\leq a\leq b\leq k}(\frac{n_{ab}\Delta_{ab}^2}{2\theta_{ab}^*}+\frac{n_{ab}\Delta_{ab}^2}{2(1-\theta_{ab}^*)})+O_p(\frac{k^3\log^{3/2}k}{n})\\
& = & \sum_{1\leq a\leq b\leq k}\frac{n_{ab}(\hat{\theta}_{ab}-\theta_{ab}^*)^2}{\theta_{ab}^*(1-\theta_{ab}^*)}+O_p(\frac{k^3\log^{3/2}k}{n}),
\end{array}
\]
which converges in distribution to the Chi-square distribution with $k(k+1)/2$ degrees of freedom by the central limit theory.
\end{proof}

\subsection{Proof of Corollary \ref{corollary:fit:a}}
 Note that \[
\max_{1\leq a\leq b\leq k}\mid \theta_{ab}^*-\hat{\theta}_{ab}\mid=O_p(\frac{\rho_nk\sqrt{\log k}}{n}).
\]
By using the techniques developed in \cite{Wang:Bickel:2017}, the proof is similar to that of Theorem \ref{theorem:fit:a}.

\subsection{Proofs for Theorem \ref{theorem:overfit:a}}
The idea for the proofs is to
embed a $k$-block model in a larger model by appropriately splitting the labels $z^*$.
Define $\nu_{k'}=\{z\in[k']$: there is at most one nonzero entry in every row of $R(z,z^*)\}$. $\nu_{k'}$ is obtained by splitting of $z^*$ such that every block in $z$ is always a subset of an existing block in $z^*$. It follows from the definition of $\nu_{k'}$ there exists a surjective funciton $h: [k']\rightarrow [k]$ describing the block assignments in $R(z,z^*)$.

The following lemma will be used in the proof of Theorem \ref{theorem:overfit:a}.
\begin{lemma}
\label{lemma:overfit:a}Suppose that $A\sim P_{\theta^*, z^*}$. With probability tending to 1,
\[
\max_{z\in [k']^n}\sup_{\theta\in\Theta_{k'}}\log f(A|\theta,z)\leq \alpha n\log k'+\sup_{\theta\in\Theta_{k}}\log f(A|\theta,z^*),
\]
where $0<\alpha\leq1-\frac{C}{\log k'}+\frac{2\log n+\log k}{n\log k'}$.
\end{lemma}

\begin{proof}
The proof is similar to that of  Lemma \ref{lemma:underfit:c}. Note that in this case $G(R(z),\theta^*)$ is maximized at any $z\in \nu_{k'}$ with the value $\sum_{1\leq a\leq b\leq k}p_{ab}\gamma(\theta_{ab}^*)$. Denote the optimal $G^*=\sum_{1\leq a\leq b\leq k}p_{ab}\gamma(\theta_{ab}^*)$ and
$$I_{\delta_n}^+=\{z\in[k']^n:G(R(z),\theta^*)-G^*<-\delta_n\},$$
for $\delta_n\rightarrow 0$ slowly enough.

By Lemma \ref{lemma:underfit:b},
\[
\begin{array}{lll}
&&\mid F(m(z)/n^2,n(z)/n^2)-G(R(z),\theta^*)\mid\\
& \leq & C\sum_{1\leq a\leq b\leq k'}\mid m_{ab}(z)/n^2-(R\theta^*R^T(z))_{ab}\mid\\
& =  & O_p((\log n/n)^{1/2}),
\end{array}
\]
since $\gamma$ is Lipschitz on any interval bounded away from $0$ and $1$.

For any $z_0\in \nu_{k'}$, it is easy to see
\[
\begin{array}{lll}
&&\max_{z\in I_{\delta_n}^+}\sup_{\theta\in\Theta_{k'}}\log f(A|\theta,z) \\
& \leq & \log(\sum_{z\in I_{\delta_n}^+}\sup_{\theta\in\Theta_{k'}}f(A|\theta,z))\\
& = & \log(\sum_{z\in I_{\delta_n}^+}\sup_{\theta\in\Theta_{k'}}e^{\log f(A|\theta,z)})\\
& \leq & \log (\sup_{\theta\in\Theta_{k'}}f(A|\theta,z_0)(k'-1)^ne^{O_p(n^2(\log n/n)^{1/2})-n^2\delta_n})\\
& < & \log (\sup_{\theta\in\Theta_{k'}}f(A|\theta,z_0))\\
& = & \sup_{\theta\in\Theta_{k'}}\log f(A|\theta,z_0)\\
& = & \sup_{\theta\in\Theta_{k'}}\sum_{1\leq a\leq b\leq k}\sum_{(u,v)\in h^{-1}(a)\times h^{-1}(b)}(m_{uv}\log\theta_{uv}+(n_{uv}-m_{uv})\log(1-\theta_{uv})),
\end{array}
\]
choosing $\delta_n \rightarrow 0$ slowly enough such that $\delta_n/(\log n/n)^{1/2}\rightarrow \infty$.

Let
\[
L_{ab}=
\sum_{(u,v)\in h^{-1}(a)\times h^{-1}(b)}(m_{uv}\log\theta_{uv}+(n_{uv}-m_{uv})\log(1-\theta_{uv})+\lambda(\sum_{(u,v)\in h^{-1}(a)\times h^{-1}(b)}n_{uv}-n_{ab}).
\]
Let
\[
\frac{\partial L_{ab}}{\partial n_{uv}}
=\log(1-\theta_{uv})+\lambda
=0.
\]
This implies that for $(u,v)\in h^{-1}(a)\times h^{-1}(b)$,  $\theta_{uv}$'s are all equal. Let $\theta_{uv}=\theta_{ab}$.
Hence,
\[
\begin{array}{lll}
&&\sum_{(u,v)\in h^{-1}(a)\times h^{-1}(b)}(m_{uv}\log\theta_{uv}+(n_{uv}-m_{uv})\log(1-\theta_{uv})\\
& = & m_{ab}\log\theta_{ab}+(n_{ab}-m_{ab})\log(1-\theta_{ab}),
\end{array}
\]
where $m_{ab}=\sum_{(u,v)\in h^{-1}(a)\times h^{-1}(b)}m_{uv}$ and $n_{ab}=\sum_{(u,v)\in h^{-1}(a)\times h^{-1}(b)}n_{uv}$.

Thus,
\[
\begin{array}{lll}
&&\max_{z\in I_{\delta_n}^+}\sup_{\theta\in\Theta_{k'}}\log f(A|\theta,z)\\
& \leq & \sup_{\theta\in\Theta_{k'}}\sum_{1\leq a\leq b\leq k}\sum_{(u,v)\in h^{-1}(a)\times h^{-1}(b)}(m_{uv}\log\theta_{uv}+(n_{uv}-m_{uv})\log(1-\theta_{uv}))\\
& = &\sup_{\theta\in\Theta_{k}}\sum_{1\leq a\leq b\leq k}(m_{ab}\log\theta_{ab}+(n_{ab}-m_{ab})\log(1-\theta_{ab}))\\
& = & \sup_{\theta\in\Theta_{k}}\log f(A|\theta,z^*).
\end{array}
\]

Note that treating $R(z)$ as a vector, $\{R(z)|z\in\nu_{k'}\}$ is a subset of the union of some of the $k'-k$ faces of the polyhedron $P_R$. For every $z\notin I_{\delta_n}^+$, $z\notin\nu_{k'}$, let $z_{\bot}$ be such that $R(z_{\bot})=\min_{R(z_0):z_0\in\nu_{k'}}\parallel R(z)-R(z_0)\parallel_2$. $R(z)-R(z_{\bot})$ is perpendicular to the corresponding $k'-k$ face. This orthogonality implies the directional derivative of $G(\cdot,\theta^*)$ along the direction of $R(z)-R(z_{\bot})$ is bounded away from $0$. That is,
$$\frac{\partial G((1-\epsilon)R(z_{\bot})+\epsilon R(z),\theta^*)}{\partial\epsilon}\mid_{\epsilon=0^+}<-C$$
for some universal positive constant $C$. Similar to the proof in Lemma \ref{lemma:underfit:c},
$$\sup_{\theta\in\Theta_{k'}}\log f(A|\theta,z)-\sup_{\theta\in\Theta_{k'}}\log f(A|\theta,z_{\bot})\leq-Cmn,$$
where $\mid z-z_{\bot}\mid=m$.
For some $0<\alpha\leq1-\frac{C}{\log k'}+\frac{2\log n+\log k}{n\log k'}$, we have
\[
\begin{array}{lll}
&&\max_{z\notin I_{\delta_n}^+, z\notin \nu_{k'}}\sup_{\theta\in\Theta_{k'}}\log f(A|\theta,z)\\
& \leq & \log(\sum_{z\notin I_{\delta_n}^+, z\notin \nu_{k'}}\sup_{\theta\in\Theta_{k'}}f(A|\theta,z))\\
& \leq & \log(\sum_{z\in \nu_{k'}}\sup_{\theta\in\Theta_{k'}}f(A|\theta,z)\sum_{m=1}^n(k-1)^mn^me^{-Cnm})\\
& \leq & \log\mid\nu_{k'}\mid+\max_{z\in \nu_{k'}}\sup_{\theta\in\Theta_{k'}}\log f(A|\theta,z)+\log(\sum_{m=1}^n(k-1)^mn^me^{-Cnm})\\
& < & \log\mid\nu_{k'}\mid+\max_{z\in \nu_{k'}}\sup_{\theta\in\Theta_{k'}}\log f(A|\theta,z)+\log(n^2ke^{-Cn})\\
& \leq & n\log k'+\max_{z\in \nu_{k'}}\sup_{\theta\in\Theta_{k'}}\log f(A|\theta,z)+2\log n+\log k-Cn\\
& \leq & \alpha n\log k'+\sup_{\theta\in\Theta_k}\log f(A|\theta,z^*).
\end{array}
\]
\end{proof}

\begin{proof}[Proof of Theorem \ref{theorem:overfit:a}]
By Lemma \ref{lemma:overfit:a} and Theorem \ref{theorem:fit:a},
\[
\begin{array}{lll}
&&\max_{z\in [k']^n}\sup_{\theta\in\Theta_{k'}}\log f(A|\theta,z)-\log f(A|\theta^*,z^*)\\
&\leq & \alpha n\log k'+\sup_{\theta\in\Theta_k}\log f(A|\theta,z^*)-\log f(A|\theta^*,z^*)\\
& = & \alpha n\log k'+\frac{1}{2}\sum_{1\leq a\leq b\leq k}\frac{n_{ab}(\hat{\theta}_{ab}-\theta_{ab}^*)^2}{\theta_{ab}^*(1-\theta_{ab}^*)}+O_p(\frac{k^3\log^{3/2}k}{n})\\
& = &\alpha n\log k' + O_p(k^2\log k)
\end{array}
\]

\end{proof}

\subsection{Proof of Corollary \ref{corollary:overfit:a}}
 Note that \[
\max_{1\leq a\leq b\leq k}\mid \theta_{ab}^*-\hat{\theta}_{ab}\mid=O_p(\frac{\rho_nk\sqrt{\log k}}{n}).
\]
By using the techniques developed in \cite{Wang:Bickel:2017}, the proof is similar to that of Theorem \ref{theorem:overfit:a}.

\subsection{Proof of Theorem \ref{theorem:modelselection:a}}
Let
\[
g_n(k, \lambda, A) = \max_{z\in [k]^n}\sup_{\theta\in\Theta_{k}}\log f(A|\theta,z)-(\lambda n\log k+\frac{k(k+1)}{2}\log n),
\]
and
\[
h_n(k, \lambda, A)= \max_{z\in [k]^n}\sup_{\theta\in\Theta_{k}}\log f(A|\theta,z)-\log f(A|\theta^*,z^*)-(\lambda n\log k+\frac{k(k+1)}{2}\log n).
\]

For $k'>k$, we have
\[
\begin{array}{lll}
& &  P(\ell(k')>\ell(k))=P( g_n(k', \lambda, A) > g_n(k, \lambda, A) ) \\
& = & P(  h_n(k', \lambda, A) > h_n(k, \lambda, A) ) )  \\
& \leq &  P(\alpha n\log k'+\sup_{\theta\in\Theta_k}\log f(A|\theta,z^*)-\log f(A|\theta^*,z^*)-(\lambda n\log k'+\frac{k'(k'+1)}{2}\log n)>\\
& & \sup_{\theta\in\Theta_k} \log f(A|\theta,z^*)-\log f(A|\theta^*,z^*)-(\lambda n\log k+\frac{k(k+1)}{2}\log n)).
\end{array}
\]
By Theorem \ref{theorem:overfit:a}, for $\lambda> (\alpha \log k')/ (\log k'-\log k)$, the above probability goes to zero.

For $k'<k$, by Theorem \ref{theorem:fit:a}, we have
\[
\begin{array}{lll}
&& P(\ell(k')>\ell(k)) = P( g_n(k', \lambda, A) > g_n( k, \lambda, A) ) \\
& = & P( h_n(k', \lambda, A) > h_n(k, \lambda, A) ) \\
& = &  P( h_n(k', \lambda, A)  >
\sup_{\theta\in\Theta_k}\log f(A|\theta,z^*)-\log f(A|\theta^*,z^*)-(\lambda n\log k+\frac{k(k+1)}{2}\log n))\\
& = &  P(\max_{z\in [k']^n}\sup_{\theta\in\Theta_{k'}}\log f(A|\theta,z)-\log f(A|\theta^*,z^*)>  \lambda(n\log k'-n\log k)\\
& & ~~~~~~+(\frac{k'(k'+1)}{2}\log n-\frac{k(k+1)}{2}\log n)+\sup_{\theta\in\Theta_k}\log f(A|\theta,z^*)-\log f(A|\theta^*,z^*))\\
& = &  P(\max_{z\in [k']^n}\sup_{\theta\in\Theta_{k'}}\log f(A|\theta,z)-\log f(A|\theta^*,z^*)>  \lambda(n\log k'-n\log k)\\
& & ~~~~~~+(\frac{k'(k'+1)}{2}\log n-\frac{k(k+1)}{2}\log n)+O_p(k^2\log k))\\
& = &  P(n^{-1}(\max_{z\in [k']^n}\sup_{\theta\in\Theta_{k'}}\log f(A|\theta,z)-\log f(A|\theta^*,z^*))-n\mu\\
& & ~~~~>  -n\mu+n^{-1}(\lambda(n\log k'-n\log k)+(\frac{k'(k'+1)}{2}\log n-\frac{k(k+1)}{2}\log n))+O_p(\frac{k^2\log k}{n})).
\end{array}
\]
By Theorem \ref{theorem:underfit:a}, the above probability goes to zero by noticing that $\lambda (\log k'-\log k)$ goes to infinity at the rate of $\log k$.

\subsection{Proof of Corollary \ref{corollary:modelselection:a}}
The proof is similar to that of Theorem \ref{theorem:modelselection:a} and thus is omitted.

\subsection{Proof of Theorem \ref{theorem:extension}}
By Taylor expansion, we have
\[
\begin{array}{lll}
&& 2(\max_{z\in [k]^n}\sup_{\theta\in\Theta_k}\log f(A|\theta,\omega,z)-\log f(A|\theta^*,\omega,z^*))\\
&=& 2(\sup_{\theta\in\Theta_k}\log f(A|\theta,\omega,z^*)-\log f(A|\theta^*,\omega,z^*))\\
& = & 2\sum_{1\leq a\leq b\leq k}(m_{ab}\log\frac{\hat{\theta}_{ab}}{\theta_{ab}^*}-n_{ab}(\hat{\theta}_{ab}-\theta_{ab}^*))\\
& = & 2\sum_{1\leq a\leq b\leq k}(n_{ab}\hat{\theta}_{ab}\log\frac{\theta_{ab}^*+\hat{\theta}_{ab}-\theta_{ab}^*}{\theta_{ab}^*}
  -n_{ab}(\hat{\theta}_{ab}-\theta_{ab}^*))\\
& = & 2\sum_{1\leq a\leq b\leq k}(n_{ab}(\theta_{ab}^*+\Delta_{ab})(\frac{\Delta_{ab}}{\theta_{ab}^*}-\frac{\Delta_{ab}^2}{2\theta_{ab}^{*2}})-n_{ab}\Delta_{ab}+O(n_{ab}\Delta_{ab}^3))\\
& = & 2\sum_{1\leq a\leq b\leq k}(\frac{n_{ab}\Delta_{ab}^2}{\theta_{ab}^*}-\frac{n_{ab}\Delta_{ab}^2}{2\theta_{ab}^*}+O(n_{ab}\Delta_{ab}^3))\\
& = & \sum_{1\leq a\leq b\leq k}(\frac{n_{ab}\Delta_{ab}^2}{\theta_{ab}^*}+O(n_{ab}\Delta_{ab}^3))\\
& = & \sum_{1\leq a\leq b\leq k}\frac{n_{ab}(\hat{\theta}_{ab}-\theta_{ab}^*)^2}{\theta_{ab}^*}(1+o_p(1))\\
& = &O_p(k^2\log k)
\end{array}
\]

\section*{Acknowledgments}
We are very grateful to two referees, an AE and the Editor for their valuable comments that
have greatly improved the manuscript. Hu is partially supported by the National Natural Science Foundation of China
(Nos. 11471136, 11571133) and the Key Laboratory of Applied Statistics of MOE (KLAS) grants (Nos. 130026507, 130028612). Qin is partially supported by the National Natural Science Foundation of China (No. 11871237). Yan is partially supported by the National Natural Science Foundation of China (No. 11771171), the Fundamental Research Funds for the Central Universities and the Key Laboratory of Applied Statistics of MOE (KLAS) grants (Nos. 130026507, 130028612). Zhao is partially supported by NSF DMS 1840203.

\end{document}